\newcommand{\set}[1]{\left\{#1\right\}}
\newcommand{\pr}[1]{\left(#1\right)}
\newcommand{\fpr}[1]{\mathopen{}\left(#1\right)}
\newcommand{\abs}[1]{{\left|#1\right|}}
\newcommand{\define}{\leftarrow}
\newcommand{\bigO}[1]{\dispfunc{\mathcal{O}}{#1}}
\DeclareRobustCommand{\dispfunc}[2]{%
  \ensuremath{%
  \ifthenelse{\equal{#2}{}}%
    {\mathit{#1}}%
    {\mathit{#1}\fpr{#2}}}}
\newcommand{\score}[1]{\dispfunc{p}{#1}}
\newcommand{\scoremax}[1]{\dispfunc{p_\textit{max}}{#1}}
\newcommand{\prbseg}{\textsc{Seg}\xspace}
\newcommand{\prbmaxseg}{\textsc{MaxSeg}\xspace}
\newcommand{\prballmaxseg}{\textsc{AllMaxSeg}\xspace}
\newcommand{\prballseg}{\textsc{AllSeg}\xspace}
\newcommand{\algfmslow}{\textsc{all-ms}\xspace}
\newcommand{\algoracle}{\textsc{oracle}\xspace}
\newcommand{\algestimate}{\textsc{estimate}\xspace}
\newcommand{\algsparsify}{\textsc{sparsify}\xspace}
\newcommand{\algalldp}{\textsc{all-dp}\xspace}
\newtheorem{lemma}{Lemma}[section]
\newtheorem{proposition}{Proposition}[section]
\newtheorem{problem}{Problem}[section]
\let\oldnl\nl
\newcommand{\nonl}{\renewcommand{\nl}{\let\nl\oldnl}}
\definecolor{yafaxiscolor}{rgb}{0.3, 0.3, 0.3}
\definecolor{yafcolor1}{rgb}{0.4, 0.165, 0.553}
\definecolor{yafcolor2}{rgb}{0.949, 0.482, 0.216}
\definecolor{yafcolor3}{rgb}{0.47, 0.549, 0.306}
\definecolor{yafcolor4}{rgb}{0.925, 0.165, 0.224}
\definecolor{yafcolor5}{rgb}{0.141, 0.345, 0.643}
\definecolor{yafcolor6}{rgb}{0.965, 0.933, 0.267}
\definecolor{yafcolor7}{rgb}{0.627, 0.118, 0.165}
\definecolor{yafcolor8}{rgb}{0.878, 0.475, 0.686}
\tikzstyle{exnode} = [inner sep = 1pt]
\tikzstyle{labnode} = [sloped, text = black, font = \scriptsize, inner sep = 1pt]
\tikzstyle{exedge} = [yafcolor5, draw, thick, >=latex, ->]
\tikzstyle{exedge2} = [yafcolor2, draw, thick, >=latex, ->]
\tikzstyle{interval} = [yafcolor1, very thick]
\newlength{\intervalwidth}
\newlength{\yafaxispad}
\newlength{\yaftlpad}
\newlength{\yaflabelpad}
\newlength{\yafaxiswidth}
\newlength{\yafticklen}
\def\pgfplots@drawtickgridlines@INSTALLCLIP@onorientedsurf#1{}
\newcommand{\yafdrawxaxis}[2]{
	\pgfplotstransformcoordinatex{#1}\let\xmincoord=\pgfmathresult 
	\pgfplotstransformcoordinatex{#2}\let\xmaxcoord=\pgfmathresult 
	\pgfsetlinewidth{\yafaxiswidth} 
	\pgfsetcolor{yafaxiscolor}
	\pgfpathmoveto{\pgfpointadd{\pgfpointadd{\pgfplotspointrelaxisxy{0}{0}}{\pgfqpointxy{\xmincoord}{0}}}{\pgfqpoint{-0.5\yafaxiswidth}{\yafaxispad}}}
	\pgfpathlineto{\pgfpointadd{\pgfpointadd{\pgfplotspointrelaxisxy{0}{0}}{\pgfqpointxy{\xmaxcoord}{0}}}{\pgfqpoint{0.5\yafaxiswidth}{\yafaxispad}}}
	\pgfusepath{stroke}

}
\newcommand{\yafdrawyaxis}[2]{
	\pgfplotstransformcoordinatey{#1}\let\ymincoord=\pgfmathresult 
	\pgfplotstransformcoordinatey{#2}\let\ymaxcoord=\pgfmathresult 
	\pgfsetlinewidth{\yafaxiswidth} 
	\pgfsetcolor{yafaxiscolor}
	\pgfpathmoveto{\pgfpointadd{\pgfpointadd{\pgfplotspointrelaxisxy{0}{0}}{\pgfqpointxy{0}{\ymincoord}}}{\pgfqpoint{\yafaxispad}{-0.5\yafaxiswidth}}}
	\pgfpathlineto{\pgfpointadd{\pgfpointadd{\pgfplotspointrelaxisxy{0}{0}}{\pgfqpointxy{0}{\ymaxcoord}}}{\pgfqpoint{\yafaxispad}{0.5\yafaxiswidth}}}
	\pgfusepath{stroke}
}
\pgfplotsset{axis y line=left, axis x line=bottom,
	tick align=outside,
	compat = 1.3,
	tickwidth=\yafticklen,
	clip = false,
	every axis title shift = 0pt,
    x axis line style= {-, line width = 0pt, opacity = 0},
    y axis line style= {-, line width = 0pt, opacity = 0},
    x tick style= {line width = \yafaxiswidth, color=yafaxiscolor, yshift = \yafaxispad},
    y tick style= {line width = \yafaxiswidth, color=yafaxiscolor, xshift = \yafaxispad},
    x tick label style = {font=\scriptsize, yshift = \yaftlpad},
    y tick label style = {font=\scriptsize, xshift = \yaftlpad},
    every axis y label/.style = {at = {(ticklabel cs:0.5)}, rotate=90, anchor=center, font=\scriptsize, yshift = -\yaflabelpad},
    every axis x label/.style = {at = {(ticklabel cs:0.5)}, anchor=center, font=\scriptsize, yshift = \yaflabelpad},
    x tick label style = {font=\scriptsize, yshift = 1pt},
    grid = major,
    major grid style  = {dash pattern = on 1pt off 3 pt},
	every axis plot post/.append style= {line width=\yafaxiswidth} ,
	legend cell align = left,
	legend style = {inner sep = 1pt, cells = {font=\scriptsize}},
	legend image code/.code={%
		\draw[mark repeat=2,mark phase=2,#1] 
		plot coordinates { (0cm,0cm) (0.15cm,0cm) (0.3cm,0cm) };%
	} 
}
\begin{document}

\begin{frontmatter}
\title{Strongly polynomial efficient approximation scheme for segmentation}
\author{Nikolaj Tatti}
\address{F-Secure, HIIT, Aalto University, Finland}
\ead{nikolaj.tatti@aalto.fi}

\begin{abstract}
Partitioning a sequence of length $n$ into $k$ coherent segments (\prbseg) is one of the classic
optimization problems. As long as the optimization criterion is additive, \prbseg
can be solved exactly in $\bigO{n^2k}$ time using a classic dynamic
program.  Due to the quadratic term, computing the exact segmentation may be
too expensive for long sequences, which has led to development of approximate
solutions. We consider an existing estimation scheme that computes $(1 + \epsilon)$
approximation in polylogarithmic time. We augment this algorithm, making it
strongly polynomial. We do this by first solving a slightly different
segmentation problem (\prbmaxseg), where the quality of the segmentation is the maximum
penalty of an individual segment. By using this solution to initialize the estimation
scheme, we are able to obtain a strongly polynomial algorithm.
In addition, we consider a cumulative version
of \prbseg, where we are asked to discover the optimal segmentation for each prefix of the input sequence.
We propose a strongly polynomial algorithm that yields $(1 + \epsilon)$
approximation in $\bigO{nk^2 / \epsilon}$ time. Finally, we consider a cumulative
version of \prbmaxseg, and show that we can solve the problem in $\bigO{nk \log k}$ time.
\end{abstract}
\end{frontmatter}



\section{Introduction}
Partitioning a sequence into coherent segments is one of the classic
optimization problems, with applications in various domains, such as
discovering context in mobile devices~\citep{himberg:01:time}, similarity
search in time-series databases~\citep{keogh2001locally}, and bioinformatics~\citep{li2001dna,salmenkivi2002genome}.

More formally, we are given a sequence of length $n$ and
a penalty function of a segment, and we are asked to find a segmentation with $k$
segments such that the sum of penalties is minimized (\prbseg). As long as the score is
additive, \prbseg can be solved exactly in $\bigO{n^2k}$ time using a
classic dynamic program~\citep{bellman:61:on}.

Due to the quadratic term, computing the exact segmentation may be too expensive for long sequences,
which has led to development of approximate
solutions. \citet{guha:06:estimate} suggested an algorithm
that yields $(1 + \epsilon)$ approximation in $\bigO{k^3 \log^2 n + k^3
\epsilon^{-2} \log n}$ time, if we can compute the penalty of a single segment
in constant time.\footnote{Note that \citet{guha:06:estimate} refers to this problem as histogram construction.} This method assumes that we are dealing with an integer
sequence that is penalized by $L_2$-error. Without these assumptions 
the computational complexity deteriorates to
$\bigO{k^3 \log \frac{\theta}{\alpha} \log n + k^3 \epsilon^{-2} \log n}$, 
where $\theta$ is the cost of the optimal solution and $\alpha$ is the smallest possible
non-zero penalty.
Consequently, without the aforementioned assumptions the term $\bigO{\frac{\theta}{\alpha}}$ may be arbitrarily
large, and the method is not strongly polynomial.

In this paper we demonstrate a simple approach for how to augment this method
making it strongly polynomial. The reason for having $\log
\frac{\theta}{\alpha}$ term is that the algorithm needs to first find a
2-approximation. This is done by setting $\alpha$ to be the smallest non-zero
cost and then increasing exponentially its value until an appropriate value is
discovered (we can verify whether the value is appropriate in $\bigO{k^3 \log n }$ time).
Instead of using the smallest non-zero cost, we first compute a $k$-approximation
of the segmentation cost, say $\eta$, and then set $\alpha = \eta / k$.
This will free us of integrality assumptions, reducing the computation time to
$\bigO{k^3 \log k \log n + k^3 \epsilon^{-2} \log n}$. Moreover, we no longer need to
discover the smallest non-zero cost, which can be non-trivial.

In order to discover $k$-approximation we consider a different segmentation
problem, where the score of the whole segmentation is not the sum but the
maximum value of a single segment (\prbmaxseg). We show in Section~\ref{sec:seg} that the segmentation
solving \prbmaxseg yields the needed $k$-approximation for \prbseg. Luckily, we can
solve \prbmaxseg in $\bigO{k^2 \log^2 n}$ time by
using an algorithm by~\citet{guha:07:linear}.

The method by~\citet{guha:06:estimate} only computes a $k$-segmentation for whole
sequence. We consider a cumulative variant of the segmentation problem, where we are asked to compute an
$\ell$-segmentation for all prefixes and for all $\ell \leq k$.
As our second contribution, given in Section~\ref{sec:allseg}, we propose a \emph{strongly polynomial} algorithm
that yields $(1 + \epsilon)$ approximation in $\bigO{nk^2 / \epsilon}$ time.
Finally, in Section~\ref{sec:allmaxseg} we also consider a cumulative variant of \prbmaxseg problem,
for which we propose an exact algorithm with computational complexity of $\bigO{nk \log k}$.

\section{Related work}\label{sec:related}

As discussed earlier, our approach is based on improving method given by
\citet{guha:06:estimate}, that achieves an $(1 + \epsilon)$ approximation in
$\bigO{k^3 \log^2 n + k^3 \epsilon^{-2} \log n}$ time, under some assumptions.
\citet{terzi:06:efficient} suggested an approximation algorithm
that yields $3$-approximation in $\bigO{n^{4/3} k^{5/3}}$ time, assuming $L_2$
error.

We also consider an algorithm for the cumulative version of the problem.  The
main idea behind the algorithm is inspired 
by~\citet{guha:01:estimate}, where the algorithm requires
$\bigO{nk^2 / \epsilon \log n}$ time as well as integrality assumptions.  We
achieve $\bigO{nk^2 / \epsilon}$ time and, more importantly, we do not need any
integrality assumptions, making the algorithm strongly polynomial.

Fast heuristics that do not yield approximation guarantees have been proposed.
These methods include top-down approaches, based on splitting segments
(see~\citet{shatkay:96:approximate,bernaola-galvan:96:compositional,douglas:73:algorithms,lavrenko:00:mining},
for example), and bottom-up approaches, based
on merging segments 
(see~\citet{palpanas:04:online}, for example).
A different approach was suggested by~\citet{himberg:01:time}, where the authors optimize boundaries of a random
segmentation.

If the penalty function is concave,
then we can discover the exact optimal segmentation in $\bigO{k(n + k)}$ time
using the SMAWK algorithm~\citep{aggarwal1987geometric,galil:90:linear}. 
This is the case when segmenting a monotonic one-dimensional sequence
and using $L_1$-error as a penalty~\citep{hassin:91:improved,fleisher:06:online}.

If we were to evaluate the segmentation by the maximum penalty of a segment, instead of the sum of all penalties,
then the problem changes radically. \citet{guha:07:linear} showed that
we can compute the exact solution in $\bigO{n + k^2 \log^3 n}$ time,
when using the $L_\infty$ penalty.
\citet{DBLP:conf/vldb/GuhaSW04} also showed that we can compute the cumulative version in $\bigO{kn\log^2 n}$ time,\footnote{or in $\bigO{kn \log n}$ time if we assume
that we can compute the penalty of a segment in constant time.}
which we improve in Section~\ref{sec:allmaxseg}.

\section{Preliminaries}\label{sec:prel}

\paragraph{Segmentation problem}
Throughout the paper we will assume that we are given an integer $n$ and a \emph{penalty function}
$\score{}$ that maps two integers $1 \leq a \leq b \leq n$ to a positive real number.

The most common selection for the penalty function is an $L_q$
distance of individual points in a sequence segment to the optimal centroid, that is,
given a sequence of real numbers, $z_1, \ldots, z_{n + 1}$, the score is equal to
\begin{equation}
\label{eq:scoreex}
	L_q(a, b) = \min_{\mu} \sum_{i = a}^{b - 1} \| z_i  - \mu \|_q\quad .
\end{equation}
Since we do not need any notion of sequence in this paper, we abstract
it out, and speak directly only about penalty function.

Throughout the paper, we will assume that 
\begin{enumerate}
\item for any $1 \leq a_1 \leq a_2 \leq b_2 \leq b_1$, we have $\score{a_2, b_2} \leq \score{a_1, b_1}$. Also, $\score{a_1, a_1} = 0$,
\item we can compute the penalty in constant time,
\item we can perform arithmetic operations to the penalty in constant time, as well as compare the scores.
\end{enumerate}

The first assumption typically holds. For example, any $L_p$-error will satisfy
this assumption, as well as any log-likelihood-based errors~\citep{tatti:12:fast}.
The third assumption is a technicality used by the definition of strongly polynomial time.
The second assumption is the most limiting one. It holds for $L_2$-error: 
we can compute the penalty in constant time by precomputing cumulative mean and the second moment.
It also holds for log-linear models~\citep{tatti:12:fast}.
However, for example, we need $\bigO{\log n}$ time to compute $L_\infty$-error~\citep{guha:07:linear}.
In such a case, we need to multiply the running time by the time needed to compute the penalty.
We will ignore the running time needed for any precomputation as this depends on the used penalty.

Given an integer $k$ and an interval $[i, j]$, where $i$ and $j$ are integers with
$1 \leq i \leq j \leq n$, 
a $k$-\emph{segmentation} $B$ \emph{covering} $[i, j]$ is a sequence of $k + 1$
integers 
\[
	B = \pr{i = b_0 \leq b_1 \leq \cdots \leq b_k = j}\quad.
\]
We omit $k$ and simply write segmentation, whenever $k$ is known from context.

Let $\score{}$ be a penalty function for individual segments. Given a
segmentation $B$, we extend the definition of $\score{}$
and define $\score{B}$, the penalty for the segmentation $B$,
as
\[
	\score{B} = \sum_{i = 1}^k \score{b_{i - 1}, b_i}\quad.
\]

We can now state the classic segmentation problem: 

\begin{problem}[\prbseg]
Given a penalty $\score{}$ and an integer $k$ find a segmentation $B$ covering $[1, n]$
that minimizes $\score{B}$.
\end{problem}

Since the penalty score is the sum of the interval penalties, we can solve \prbseg with a dynamic program
given by~\citet{bellman:61:on}. The computational complexity of this program is $\bigO{n^2k}$, which
is prohibitively slow for large $n$.

We will also consider a cumulative variant of the segmentation problem defined as follows.

\begin{problem}[\prballseg]
Given a penalty function $\score{}$ and an integer $k$ find an \mbox{$\ell$-segmentation} $B$ covering $[1, i]$
that minimizes $\score{B}$, for every $i = 1, \ldots, n$ and $\ell = 1, \ldots k$.
\end{problem}

Note that~\citet{bellman:61:on} in fact solves \prballseg, and uses the solution to solve
\prbseg. However, the state-of-the-art approximation algorithm, by~\citet{guha:06:estimate}, solving \prbseg does not
solve \prballseg, hence we will propose a separate algorithm.

\paragraph{Approximation algorithm}
Our contribution is an additional component to the approximation algorithm
by~\citet{guha:06:estimate}, a state-of-the-art approximation
algorithm for estimating \prbseg. We devote the rest of this section
to explaining the technical details of this algorithm, and what is the issue that
we are addressing.

\citet{guha:06:estimate} showed that under some assumptions it is possible to
obtain a $(1 + \epsilon)$ approximation to \prbseg in $\bigO{k^3 \log^2(n) + k^3 \epsilon^{-2} \log(n)}$ time.
What makes this algorithm truly remarkable is that it is polylogarithmic in $n$, while the exact algorithm
is quadratic in $n$.
Here we assume that we have a constant-time access to the score $\score{}$.
If one uses $L_2$ error as a penalty, then one is forced to precompute the score, which requires
additional $\bigO{n}$ time. However, this term depends neither on $k$ nor on $\epsilon$.

The key idea behind the approach by~\citet{guha:06:estimate} is a sub-routine, $\algoracle(\delta, u)$,
that relies on two parameters $\delta$ and $u$. \algoracle constructs a
$k$-segmentation with the following property.\footnote{The actual subroutine is too complex to be described in compact space. For details, we refer reader to~\citep{guha:06:estimate}.}

\begin{proposition}
\label{prop:oracle}
Suppose that $\score{}$ is a penalty function, and $k$ is an integer.
Let $\theta$ be the cost of an optimal $k$-segmentation covering $[1, n]$.
Let $\delta$ and $u$ be two positive real numbers such that
$\theta + \delta \leq u$. Then $\algoracle(\delta, u)$ returns a $k$-segmentation
with a cost of $\tau$ such that $\tau \leq \theta + \delta$.
$\algoracle(\delta, u)$ runs in $\bigO{k^3 \frac{u^2}{\delta^2} \log n}$ time.
\end{proposition}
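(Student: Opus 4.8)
The plan is to establish the proposition by reconstructing the bucketed dynamic program that underlies the estimation scheme of \citet{guha:06:estimate}; since the proposition only records the interface of \algoracle, the task reduces to exhibiting a subroutine that meets the stated guarantee and then verifying both its correctness and its running time. The enabling observation is that the hypothesis $\theta + \delta \leq u$ lets us ignore every segmentation of cost exceeding $u$: an optimal segmentation has cost $\theta \leq u - \delta < u$, so the search never has to track cost values above $u$, which is what bounds the size of the table.

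First I would discretize the cost axis, partitioning $[0, u]$ into levels of width $\delta / k$, which yields $L = \bigO{k u / \delta}$ levels. The core is a dynamic program whose state is a pair (number of segments used, rounded cost level) and whose value $f(j, \ell)$ is the \emph{farthest} right endpoint $b$ such that the prefix $[1, b]$ admits a $j$-segmentation whose rounded cost lies within level $\ell$. The transition from $j$ to $j+1$ appends one segment: from an endpoint $b = f(j, \ell')$ one extends as far to the right as a single segment whose penalty fits in the remaining $\ell - \ell'$ levels permits. By the monotonicity of $\score{}$ the farthest reachable endpoint grows with the allotted budget, so it can be located by binary search over the penalty function, and by constant-time evaluation of $\score{}$ each such search costs $\bigO{\log n}$. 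The returned segmentation is the one witnessing the smallest level $\ell$ with $f(k, \ell) \geq n$, and $\tau$ is its true (unrounded) cost.

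Correctness is an error-accumulation argument with the rounding always performed \emph{upward}. Rounding up guarantees that the true cost of any segmentation never exceeds its rounded cost, so in particular $\tau$ is at most the rounded cost of the witness. Conversely, taking a true optimum and rounding its $k$ segment penalties up charges each segment at most $\delta / k$, for a total overcharge of at most $k \cdot (\delta/k) = \delta$; hence there is a feasible level of value at most $\theta + \delta$ whose endpoint reaches $n$, so the minimal such level has rounded cost at most $\theta + \delta$. Combining the two directions gives $\tau \leq \theta + \delta$, exactly the claimed guarantee. For the running time, the table has $\bigO{k}$ segment layers and $\bigO{k u / \delta}$ cost levels, so $\bigO{k^2 u / \delta}$ entries; filling one entry ranges over the $\bigO{k u / \delta}$ split levels and performs an $\bigO{\log n}$ search, for a total of $\bigO{k^3 \frac{u^2}{\delta^2} \log n}$, matching the statement.

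The step I expect to be the main obstacle is the transition itself, namely locating the farthest endpoint reachable by a single segment of bounded penalty in $\bigO{\log n}$ rather than $\bigO{n}$ time; this is where the monotonicity and the constant-time evaluation of $\score{}$ must be combined into a genuinely logarithmic search, and it is the part that the footnote warns is intricate in the original construction. A secondary but delicate point is the bookkeeping of the rounding direction and the off-by-one handling of levels: only if every segment is charged upward to the next level boundary does the additive bound hold \emph{exactly} as $\tau \leq \theta + \delta$ rather than up to a constant factor, and getting the granularity $\delta / k$ right is precisely what produces the $k^3$ factor in the running time.
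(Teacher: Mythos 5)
Note first that the paper contains no proof of this proposition: it is an interface statement for the subroutine of \citet{guha:06:estimate}, and the footnote attached to it explicitly defers all details to that reference. Your reconstruction is therefore measured against the external source rather than anything internal, and against that benchmark it is essentially the right argument: the level-discretized dynamic program over states (segments used, rounded cost level) whose value is the farthest reachable prefix, the upward rounding with granularity $\delta/k$ so that $k$ segments accumulate at most $\delta$ of overcharge, and the accounting $k \cdot L^2 \cdot \log n$ with $L = \bigO{k u / \delta}$ levels together reproduce both the guarantee $\tau \leq \theta + \delta$ and the stated running time, with the hypothesis $\theta + \delta \leq u$ doing exactly the job you assign it, namely bounding the number of levels. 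One step in your sketch deserves to be spelled out, because it is the only place where correctness could genuinely fail: the transition extends only from the farthest endpoint $f(j, \ell')$, so you need a dominance claim --- any prefix admitting a $j$-segmentation of rounded cost within $\ell'$ levels is dominated by $f(j, \ell')$, in the sense that whatever a new segment can reach from the former it can also reach from the latter. This requires monotonicity of $\score{}$ in the \emph{left} endpoint (shrinking a segment from the left cannot increase its penalty), which is the paper's first assumption on the penalty function; in your write-up monotonicity is invoked only to justify the right-endpoint binary search. With that dominance lemma made explicit, the induction over $j$ closes and the argument is complete.
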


Proposition~\ref{prop:oracle} describes the trade-off between the accuracy and computational complexity:
We can achieve good accuracy with small $\delta$, and large enough $u$, but we have to pay the price in running time.

We will now describe how to select $u$ and $\delta$ in a smart way.
Let $\theta$ be the cost of an optimal $k$-segmentation.
Assume that we know an estimate of $\theta$, say $\eta$, such that $\eta \leq \theta \leq 2\eta$.
Let us set
\[
	u = (2 + \epsilon)\eta \quad\text{and}\quad \delta = \epsilon \eta\quad.
\]
As $\theta + \delta \leq u$,
Proposition~\ref{prop:oracle} guarantees that $\algoracle(\delta, u)$ returns a $k$-segmentation
with a cost of $\tau$ such that
\[
	\tau \leq \theta + \delta = \theta + \epsilon \eta \leq \theta + \epsilon \theta = (1 + \epsilon)\theta\quad.
\]
In other words, the resulting segmentation yields a $(1 + \epsilon)$ approximation guarantee.
The computational complexity of $\algoracle(\delta, u)$ is equal to $\bigO{k^3 \epsilon^{-2} \log n}$.

The difficult part is to discover $\eta$ such that $\eta \leq \theta \leq
2\eta$. This can be also done using the oracle (see Algorithm~\ref{alg:estimate}\footnote{
The original pseudo-code given by~\citet{guha:06:estimate} contains a small error, and only yields $\eta \leq \theta < 4\eta$.
Here we present the corrected variant.}). Assume that we know a \emph{lower bound} for $\theta$, say
$\alpha$. We first set $\eta = \alpha$ and check using \algoracle to see if $\eta$ is too small. If it is,
then we increase the value and repeat.

\begin{algorithm}
\caption{$\algestimate(\alpha)$, computes $\eta$ such that $\eta \leq \theta \leq 2 \eta$, where $\theta$ is the optimal cost.
Requires $\alpha \leq \theta$ as an input parameter.}
\label{alg:estimate}
$\eta \define \alpha$\;
$\tau \define $ the cost of the solution by $\algoracle(\eta / 2, 2\eta)$\;
\While {$\tau > 2 \eta$} {
	$\eta \define 1.5\eta$\;
	$\tau \define $ the cost of the solution by $\algoracle(\eta / 2, 2\eta)$\;
}
\Return $\eta$\;
\end{algorithm}

To see why \algestimate works,
assume that we are at a point in the while-loop where $2\eta < \theta$.
Then $\tau \geq \theta > 2\eta$ and the while-loop is not terminated.
This guarantees that for the final $\eta$, we have $\theta \leq 2\eta$.
Let us show that $\eta \leq \theta$. If the while-loop is terminated while $\eta < \theta$,
then there is nothing to prove. Assume otherwise, that is, at some point we have
$\eta \leq \theta \leq 1.5\eta$.
Since $\theta + \eta/2 \leq 2\eta$, Proposition~\ref{prop:oracle} guarantees that 
$\tau \leq \theta + \eta/2 \leq 2\eta$, and we exit the while-loop with $\eta \leq \theta$.

The computational complexity of a single \algoracle call is $\bigO{k^3 \log n}$,
and the total computational complexity of \algestimate is $\bigO{\log(\frac{\theta}{\alpha}) k^3 \log n}$.
At this point, \citet{guha:06:estimate} assume
(implicitly) that the penalty function is $L_2$ error of a sequence (see Eq.~\ref{eq:scoreex}),
and the values in the sequence are
integers encoded with a standard bit representation in at most $\bigO{\log n}$ space.
These assumptions have two consequences:
(\emph{i}) we can use $\alpha = 1/2$, the smallest non-zero cost for a segmentation, and 
(\emph{ii}) the number of iterations $\bigO{\log \frac{\theta}{\alpha}}$ is bounded by $\bigO{\log n}$.
This leads to a computational complexity of $\bigO{k^3 \log^2 n}$.

The $L_2$ assumption is not critical since the same argument can be done for
many other cost functions, however one is forced to find an appropriate
$\alpha$ for each case individually. Moreover, there are penalty functions for which
this argument does not work, for example, $\score{a, b} = \exp\pr{L_2(a, b)}$, where $L_2$ is given in Eq.~\ref{eq:scoreex}.

The more critical assumption is that the numbers in a sequence are integers,
and this assumption can be easily violated if we have a sequence of numbers
represented in a floating-point format. If this is the case, then we can no longer
select $\alpha = 1/2$. In fact, there is no easy way of selecting $\alpha$ such that
($i$)
we are sure that $\alpha$ is less than the optimal segmentation score, and
($ii$)
the number of loops needed by $\algestimate$,
$\bigO{\log \frac{\theta}{\alpha}}$, is bounded by a (slowly increasing) function of $n$.

In the following section, we will show how to select $\alpha$ such that the number
of loops in \algestimate remains small. More specifically, we demonstrate how to select
$\alpha$ such that $\alpha \leq \theta \leq k\alpha$. This immediately implies that the
computational complexity of \algestimate reduces to $\bigO{k^3 \log(k) \log(n)}$.
More importantly, we do not need any awkward assumptions about having a sequence of only integer values,
making this algorithm strongly polynomial.
Moreover, this procedure works on any penalty function, hence a finding 
appropriate $\alpha$ manually is no longer needed.

\section{Strongly polynomial scheme for segmentation}\label{sec:seg}

To find $\alpha$, the parameter for \algestimate,
we consider a different optimization problem, where
the segmentation is evaluated by its most costly segment.

\begin{problem}[\prbmaxseg]
\label{prb:max}
Given a penalty $\score{}$ and an integer $k$, 
find a $k$-segmentation $B$ covering $[1, n]$ that minimizes 
\[
	\scoremax{B} = \max_{1 \leq j \leq k} \score{b_{j - 1}, b_j}\quad.
\]
\end{problem}

The next proposition states why solving \prbmaxseg helps us to discover $\alpha$.

\begin{proposition}
\label{prop:estimate}
Suppose that $\score{}$ is a penalty function, and $k$ is an integer.
Let $B$ be a solution to \prbseg, and let $B'$ be a solution to \prbmaxseg.
Then
\[
	\score{B'}/k \leq \score{B} \leq \score{B'}\quad.
\]
\end{proposition}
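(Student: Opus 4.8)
The plan is to prove the two inequalities separately, using the optimality of $B$ for \prbseg and of $B'$ for \prbmaxseg, with the only real ingredient being the elementary relationship between a sum of $k$ nonnegative terms and their maximum.

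First I would dispatch the upper bound $\score{B} \leq \score{B'}$. Since $B$ is an optimal solution to \prbseg, it minimizes the sum-penalty $\score{\cdot}$ over \emph{all} $k$-segmentations covering $[1, n]$. The segmentation $B'$ is itself one such $k$-segmentation, so by optimality of $B$ we immediately get $\score{B} \leq \score{B'}$. No further work is needed here.

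For the lower bound $\score{B'}/k \leq \score{B}$ I would route through the maximum-penalty objective $\scoremax{\cdot}$. The key observation is that for any $k$-segmentation $C$ its $k$ segment penalties are nonnegative (each is a value of $\score{}$, hence $\geq 0$), so their sum is at most $k$ times their largest term and at least that largest term; that is,
\[
	\scoremax{C} \leq \score{C} \leq k\,\scoremax{C}\quad.
\]
Applying the right inequality to $C = B'$ and the left inequality to $C = B$, and inserting the optimality of $B'$ for \prbmaxseg (namely $\scoremax{B'} \leq \scoremax{B}$, since $B'$ minimizes the maximum segment penalty and $B$ is just some competing $k$-segmentation), I obtain the chain
\[
	\score{B'} \leq k\,\scoremax{B'} \leq k\,\scoremax{B} \leq k\,\score{B}\quad.
\]
Dividing by $k$ yields $\score{B'}/k \leq \score{B}$, which together with the first part gives the full statement $\score{B'}/k \leq \score{B} \leq \score{B'}$.

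There is no substantial obstacle in this argument; it is essentially a two-line comparison. The only points demanding care are bookkeeping ones: both $B$ and $B'$ must be segmentations with exactly the same number of segments $k$ (so that the factor $k$ linking the sum and the maximum is the correct one), and the segment penalties must be nonnegative (which holds because $\score{}$ takes values in the nonnegative reals, with $\score{a,a}=0$). Provided these are noted, the inequalities follow directly from optimality and the elementary bound $\max \leq \text{sum} \leq k\cdot\max$.
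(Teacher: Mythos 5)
Your proof is correct and follows essentially the same route as the paper: the identical chain $\score{B'} \leq k\,\scoremax{B'} \leq k\,\scoremax{B} \leq k\,\score{B}$ for the lower bound, and the optimality of $B$ for \prbseg to dispatch the upper bound (which the paper calls ``trivial''). Your write-up is just slightly more explicit about which optimality property justifies each step.
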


\begin{proof}
To prove the first inequality write
\[
\begin{split}
	\score{B'} & = \sum_{i = 1}^k \score{b'_{i - 1}, b'_i} \leq k \max_{1 \leq i \leq k} \score{b'_{i - 1}, b'_i} \\
	& = k\scoremax{B'} \leq k\scoremax{B} \leq k\score{B}.
\end{split}
\]
The claim follows as the second inequality is trivial.
\end{proof}

By solving \prbmaxseg and obtaining a solution $B'$, we can set $\alpha =
\score{B} / k$ for \algestimate. The remaining problem is how to solve
\prbmaxseg in sub-linear time.

Here we can reuse an algorithm given by~\citet{guha:07:linear}.
This algorithm is designed to solve an instance of \prbmaxseg, where the penalty function is $L_\infty$, which
is
\[
	\score{a, b} = \min_{\mu} \sum_{i = a}^{b - 1} |x_i - \mu|\quad.
\]
Luckily, the same algorithm and the proof of correctness, see Lemma~3~in~\citep{guha:07:linear}, is valid for
any monotonic penalty.\!\footnote{For the sake of completeness, we revisit this algorithm in supplementary material.}
The algorithm runs in $\bigO{k^2 \log^2 n}$ time.
Thus, the total running time to obtain a segmentation with $(1 + \epsilon)$ approximation guarantee
is
\[
    \bigO{k^2 \log^2(n) + k^3 \log(k) \log(n) + k^3 \epsilon^{-2} \log(n)}\quad.
\]

\section{Strongly polynomial and linear-time scheme for cumulative segmentation}\label{sec:allseg}

In this section we present a strongly polynomial algorithm that approximates
\prballseg in $\bigO{nk^2 / \epsilon}$ time.  Let $o[i, \ell]$ be the cost for
an optimal $\ell$-segmentation covering $[1, i]$.  The optimal segmentation is
computed with a dynamic program based on the identity
\[
	o[i, \ell] = \min_{j \leq i} o[j, \ell - 1] + \score{j, i}\quad.
\]
The integer $j$ yielding the optimal
cost will be the starting point of the last segment in an optimal
segmentation.  To speed-up the discovery of $j$, we will not test every $j \leq i$,
but instead we will use a small set of candidates, say $A$.  So, instead of
computing a single entry in $\bigO{n}$ time, we only need $\bigO{\abs{A}}$
time.

The set $A$ depends on $i$ and $\ell$, and we update it as we change $i$ and
$\ell$.  The key point here is to keep $A$ very small, in fact, $\abs{A} \in
\bigO{k / \epsilon}$, while having enough entries to yield the approximation
guarantee.

Our approach works as follows (see Algorithm~\ref{alg:alldp} for the pseudo-code): 
the algorithm loops over $\ell$ and $i$ with $i$ being the inner for-loop, and maintains a 
set of candidates $A$. There are 3 main steps inside the inner for-loop:

(1)
Test the current candidates $A$.

(2)
Test $a = 1 + \max A$, and add $a$ to $A$. Repeat this step until $a = i$
or if
the current score $s[i, \ell]$ becomes smaller than $s[a, \ell - 1]$.
In the latter case, we can stop because $s[a', \ell - 1] \geq s[i, \ell]$
for any $a' > a$, so we know that there are no candidates that can improve $s[i, \ell]$.

(3)
For every consecutive triplet $a_j, a_{j + 1}, a_{j + 2} \in A$
such that $s[a_{j + 2}, \ell - 1] - s[a_j, \ell - 1] \leq s[i, \ell] \frac{\epsilon}{k + \ell \epsilon}$,
remove $a_j$ (see Algorithm~\ref{alg:sparsify}).
This will keep $\abs{A}$ small for the next round while yielding the approximation guarantee.

\begin{algorithm}[ht!]
	$s[i, 1] \define \score{1, i}$ for $i = 1, \ldots, n$\;
	\caption{$\algalldp(k, \epsilon)$, computes a table $s$ such that $s[i, \ell]$ is a $(1 + \epsilon)$ approximation of the cost of an optimal $\ell$-segmentation covering $[1, i]$.}
	\label{alg:alldp}
	\ForEach {$\ell = 2, \ldots, k$} {
		$A \define \set{1}$\;
		\ForEach {$i = 1, \ldots, n$} {
			$s[i, \ell] \define \min_{a \in A} s[a, \ell - 1] + \score{a, i}$\nllabel{alg:dpmin}\;
			$a \define 1 + \max A$\;
			\While {$a \leq i$ \AND $s[a, \ell - 1] \leq s[i, \ell]$} {
				$s[i, \ell] \define \min (s[i, \ell],\, s[a, \ell - 1] + \score{a, i})$\nllabel{alg:dpmin2}\;
				insert $a$ to $A$\;
				$a \define a + 1$\;
			}
			$A \define \algsparsify(A, s[i, \ell]\frac{\epsilon}{k + \ell\epsilon}, \ell)$\nllabel{alg:callsparse}\;
		}
	}
	\Return $s$\;
\end{algorithm}

\begin{algorithm}[ht!]
	\caption{$\algsparsify(A, \delta, \ell)$, sparsifies $A = a_1, \ldots, a_{\abs{A}}$ using $s[a_i, \ell - 1]$ and $\delta$.}
	\label{alg:sparsify}
	$j \define 1$\;
	\While {$j \leq \abs{A} - 2$} {
		\uIf {$s[a_{j + 2}, \ell - 1] -  s[a_j, \ell - 1] \leq \delta$} {
			remove $a_{j + 1}$ from $A$, and update the indices\;
		}
		\Else {
			$j \define j + 1$\;
		}
	}
\end{algorithm}

Our next step is to prove the correctness of \algalldp.

\begin{proposition}
\label{prop:alldpcorrect}
Let $o[i, \ell]$ be the cost of an optimal $\ell$-segmentation covering $[1, i]$.
Let $s = \algalldp(k, \epsilon)$ be the solution returned by the approximation algorithm. 
Then for any $i = 1, \ldots, n$ and $\ell = 1, \ldots, k$,
\[
	s[i, \ell] \leq o[i, \ell]\pr{1 + \frac{\epsilon \ell }{k}}\quad.
\]
\end{proposition}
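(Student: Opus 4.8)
The plan is to induct on the number of segments $\ell$. The base case $\ell = 1$ is immediate, since the initialization of \algalldp sets $s[i, 1] = \score{1, i} = o[i, 1]$, and penalty monotonicity makes $\score{1, i}$ nondecreasing in $i$. For the inductive step I will carry three statements simultaneously at each level $\ell$: that $s[i, \ell] \geq o[i, \ell]$ (the table over-estimates the optimum), that $s[\cdot, \ell]$ is nondecreasing in its first argument, and the claimed bound $s[i, \ell] \leq o[i, \ell]\pr{1 + \frac{\epsilon \ell}{k}}$. The over-estimate is a one-line consequence of the induction hypothesis together with the fact that line~\ref{alg:dpmin} minimizes over a subset of the legal split points. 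Monotonicity of $s[\cdot, \ell]$ will follow by checking that every point $a$ tested while computing $s[i+1, \ell]$ satisfies $s[a, \ell - 1] + \score{a, i + 1} \geq s[i, \ell]$: points inherited in $A$ were already tested for $i$ and only acquire a larger penalty as the right endpoint grows, while points beyond $\max A$ have $s[a, \ell - 1] \geq s[i, \ell]$ precisely because the while-loop stopping rule and the level-$(\ell-1)$ monotonicity force it.

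The heart of the argument is a coverage invariant for the candidate set $A$. Fix $\ell$ and let $w(t)$ denote the smallest element of $A$ that is at least $t$. I will show that after every sparsification (line~\ref{alg:callsparse}) we have $s[w(t), \ell - 1] \leq s[t, \ell - 1] + s[i, \ell]\frac{\epsilon}{k + \ell\epsilon}$ for every $t \leq i$ for which $w(t)$ is defined. The delicate point is that \algsparsify runs once per inner iteration, so a naive charging would let the error grow with the number of removals. This is avoided by re-anchoring: when Algorithm~\ref{alg:sparsify} deletes the middle point $a_{j+1}$ of a triplet, the surviving right point $a_{j+2}$ becomes $w(t)$ only for those $t$ with $a_j < t \leq a_{j+1}$, and for such $t$ the level-$(\ell - 1)$ monotonicity gives $s[a_j, \ell - 1] \leq s[t, \ell - 1]$; combined with the deletion test $s[a_{j+2}, \ell - 1] - s[a_j, \ell - 1] \leq \delta$ this yields $s[a_{j+2}, \ell - 1] \leq s[t, \ell - 1] + \delta$ against a single threshold, so no accumulation occurs. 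Since $s[\cdot, \ell]$ is nondecreasing, the thresholds used in earlier iterations are dominated by the current one, and the invariant persists as $i$ grows.

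With the invariant in hand I finish by examining the optimal last split $j^{*}$, the index with $o[i, \ell] = o[j^{*}, \ell - 1] + \score{j^{*}, i}$. If $j^{*}$ is actually tested, then the induction hypothesis on $s[j^{*}, \ell - 1]$ and $\score{j^{*}, i} \geq 0$ give the bound with the smaller factor $1 + \frac{\epsilon(\ell - 1)}{k}$. If the while-loop stopped before reaching $j^{*}$, then $s[i, \ell]$ already lies below some $s[a_0, \ell - 1] \leq s[j^{*}, \ell - 1]$, and the induction hypothesis again closes the bound. The remaining case applies the invariant with $t = j^{*}$ and witness $a = w(j^{*}) \geq j^{*}$: since $\score{a, i} \leq \score{j^{*}, i}$, I obtain $s[i, \ell] \leq s[a, \ell - 1] + \score{a, i} \leq o[i, \ell]\pr{1 + \frac{\epsilon(\ell - 1)}{k}} + s[i, \ell]\frac{\epsilon}{k + \ell\epsilon}$. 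Solving this self-referential inequality for $s[i, \ell]$ — the one routine computation I defer — will collapse the factor $\pr{1 + \frac{\epsilon(\ell - 1)}{k}}\frac{k + \ell\epsilon}{k + (\ell - 1)\epsilon}$ to exactly $1 + \frac{\epsilon \ell}{k}$, which is why the threshold in Algorithm~\ref{alg:alldp} is chosen to be $\frac{\epsilon}{k + \ell\epsilon}$. I expect the main obstacle to be exactly this non-accumulation of sparsification error across the inner loop; everything else is bookkeeping resting on penalty monotonicity and the level-$(\ell-1)$ monotonicity of $s$.
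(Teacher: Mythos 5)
Your proposal is correct, and its skeleton coincides with the paper's own proof: the same monotonicity lemma for $s[\cdot,\ell]$, the same case analysis on the optimal last split point, and your ``coverage invariant'' for $w(t)$ is exactly the paper's $\delta$-density lemma for $A_i$ stated in witness form --- including the same key observation that no error accumulates because the deletion test in \algsparsify compares the two \emph{surviving} endpoints $a_j, a_{j+2}$ directly, so gaps are always certified against the current threshold rather than chained.

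The one genuine difference is how the sparsification error is absorbed at the end. The paper runs a double induction on $(\ell, i)$: it bounds $\delta_{i-1} = s[i-1,\ell]\frac{\epsilon}{k+\ell\epsilon}$ by invoking the approximation bound already established at $(i-1,\ell)$, obtaining $\delta_{i-1} \leq o[i,\ell]\frac{\epsilon}{k}$, and then adds this additively to $o[i,\ell]\gamma$. You instead induct on $\ell$ alone, keep the threshold self-referential as $s[i,\ell]\frac{\epsilon}{k+\ell\epsilon}$, and solve the inequality $s[i,\ell] \leq o[i,\ell]\gamma + s[i,\ell]\frac{\epsilon}{k+\ell\epsilon}$ for $s[i,\ell]$; the factor does collapse as you predict, since $\pr{1-\frac{\epsilon}{k+\ell\epsilon}}^{-1}\gamma = \frac{k+\ell\epsilon}{k+(\ell-1)\epsilon}\cdot\frac{k+(\ell-1)\epsilon}{k} = 1+\frac{\epsilon\ell}{k}$. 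Both arguments rest on the same algebraic identity behind the choice $\frac{\epsilon}{k+\ell\epsilon}$; yours buys a simpler induction structure (no induction on $i$ in the approximation claim) at the cost of a division step, valid because $1-\frac{\epsilon}{k+\ell\epsilon}>0$, while the paper's stays purely additive. Your use of $\delta_{i-1}\leq\delta_i$ does require monotonicity of $s[\cdot,\ell]$ at the \emph{current} level, which you rightly carry as a separate induction item independent of the approximation bound --- the role played by Lemma~\ref{lem:monotone} in the paper. One small spot to tighten when writing this up: in your monotonicity argument, when the while-loop at iteration $i$ terminates because $a > i$ (rather than by the score test), the bound for points beyond $\max A$ does not come from the stopping rule itself but from the fact that $i$ was tested with zero penalty, giving $s[i,\ell]\leq s[i,\ell-1]\leq s[a,\ell-1]$; the paper handles this sub-case explicitly.
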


Before proving the claim, let us introduce some notation that will be used throughout the remainder of the section.
Assume that $\ell$ is fixed, and
let $A_i$ be the set $A$ at the beginning 
of the $i$th round of \algalldp. Let $m_i = \max A_i$. 
The entries in $A_i$ are always sorted, and we will often refer to these entries as $a_j$.
Let us write $A_i'$ to be the set $A$ which is given to \algsparsify during the $i$th round. 

To prove the claim, we need two lemmas. 
First we show that the score is increasing as a function of $i$.

\begin{lemma}
\label{lem:monotone}
The score is monotone,
$s[i, \ell] \leq s[i + 1, \ell]$.
\end{lemma}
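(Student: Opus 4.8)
The plan is to fix the layer $\ell$ and argue by induction on $\ell$, since the recurrence defining $s[\cdot, \ell]$ refers only to the previous layer $s[\cdot, \ell - 1]$. The base case $\ell = 1$ is immediate from monotonicity of the penalty: $s[i, 1] = \score{1, i} \leq \score{1, i + 1} = s[i + 1, 1]$, taking $a_1 = a_2 = 1$, $b_2 = i$, $b_1 = i + 1$ in the first assumption on $\score{}$. For the inductive step I assume $s[\cdot, \ell - 1]$ is monotone and prove the same for layer $\ell$.

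First I would record an auxiliary inequality, $s[i, \ell] \leq s[i, \ell - 1]$, reflecting that appending an empty segment turns an $(\ell - 1)$-segmentation into an $\ell$-segmentation of no larger cost. If $i \in A_i'$, then $i$ is among the candidates minimized over in round $i$ and contributes $s[i, \ell - 1] + \score{i, i} = s[i, \ell - 1]$. If $i \notin A_i'$, the while-loop of round $i$ must have halted at $a = \max A_i' + 1 \leq i$ because $s[a, \ell - 1] > s[i, \ell]$, and inductive monotonicity of layer $\ell - 1$ gives $s[i, \ell - 1] \geq s[a, \ell - 1] > s[i, \ell]$.

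For the main inequality I would pick a minimizer $a^*$ with $s[i + 1, \ell] = s[a^*, \ell - 1] + \score{a^*, i + 1}$; as $a^* \in A_{i+1}'$ we have $a^* \leq i + 1$, and I split into three cases. If $a^*$ survives sparsification, i.e. $a^* \in A_{i+1} \subseteq A_i'$, then $a^*$ was already a candidate in round $i$, so $s[i, \ell] \leq s[a^*, \ell - 1] + \score{a^*, i} \leq s[a^*, \ell - 1] + \score{a^*, i + 1} = s[i + 1, \ell]$, using $\score{a^*, i} \leq \score{a^*, i + 1}$. If $a^* \leq i$ is a fresh candidate created during round $i + 1$ (so $a^* > \max A_i'$), then round $i$'s loop stopped early, meaning $s[\max A_i' + 1, \ell - 1] > s[i, \ell]$; monotonicity of layer $\ell - 1$ then gives $s[a^*, \ell - 1] \geq s[\max A_i' + 1, \ell - 1] > s[i, \ell]$, whence $s[i + 1, \ell] \geq s[a^*, \ell - 1] > s[i, \ell]$. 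Finally, if $a^* = i + 1$ then $\score{i + 1, i + 1} = 0$ gives $s[i + 1, \ell] = s[i + 1, \ell - 1]$, and chaining the auxiliary inequality with inductive monotonicity yields $s[i, \ell] \leq s[i, \ell - 1] \leq s[i + 1, \ell - 1] = s[i + 1, \ell]$.

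The main obstacle is that the candidate set $A$ is rebuilt from round to round, so the minimizer $a^*$ of round $i + 1$ need not have been present in round $i$, and the two minima cannot be compared term by term. The resolution is that a fresh candidate can appear only when the while-loop of round $i$ terminated early, and the loop's own stopping test $s[a, \ell - 1] > s[i, \ell]$ is precisely the inequality needed — but only once layer $\ell - 1$ is known to be monotone, which is what forces the induction on $\ell$. I would also verify the small structural fact that \algsparsify never deletes the largest element of $A$, so that $\max A_i' = \max A_{i+1}$ as used throughout.
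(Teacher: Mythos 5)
Your proof is correct and takes essentially the same route as the paper's: induction on $\ell$, then a case split on where the round-$(i+1)$ minimizer lies, comparing surviving candidates directly (via $A_{i+1} \subseteq A'_i$ and monotonicity of $\score{}$) and dismissing fresh candidates via the while-loop's stopping test combined with level-$(\ell-1)$ monotonicity, together with the fact that \algsparsify never deletes $\max A$. The only cosmetic difference is that you split off $a^* = i+1$ as a separate case handled through the auxiliary inequality $s[i, \ell] \leq s[i, \ell - 1]$, whereas the paper folds it into its single ``$b > m_{i+1}$'' case.
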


\begin{proof}
We will prove the lemma by induction on $\ell$. The $\ell = 1$ case holds since $s[i, \ell] = \score{1, i}$.
Assume that the lemma holds for $\ell - 1$.

Algorithm~\ref{alg:alldp}
uses the indices in $A'_i$ 
(Lines~\ref{alg:dpmin} and \ref{alg:dpmin2})
for computing $s[i, \ell]$, 
that is,
\begin{equation}
\label{eq:scorebound}
	s[i, \ell] = \min_{a \in A'_i} s[a, \ell - 1] + \score{a, i}\quad.
\end{equation}
Let $B$ be the $\ell$-segmentation corresponding to the cost $s[i + 1, \ell]$.
Let $b$ be the starting point of the last segment of $B$.  
If $b \leq m_{i + 1}$, then $b$ was selected during Line~\ref{alg:dpmin},
that is, $b \in A_{i + 1} \subseteq A'_i$, so Eq.~\ref{eq:scorebound} implies that
\[
\begin{split}
	s[i, \ell] & \leq s[b, \ell - 1] + \score{b, i} \\
	& \leq s[b, \ell - 1] + \score{b, i + 1} = s[i + 1, \ell]\quad.
\end{split}
\]
On the other hand, if $b > m_{i + 1}$, then, due to the while-loop in \algalldp, either
\begin{equation}
\label{eq:boundsi1}
	s[i, \ell] < s[m_{i + 1} + 1, \ell - 1]
\end{equation}
or $i = m_{i + 1} \in A'_i$, and so Eq.~\ref{eq:scorebound} implies that
\begin{equation}
\label{eq:boundsi2}
	s[i, \ell] \leq s[i, \ell - 1] = s[m_{i + 1}, \ell - 1]\quad.
\end{equation}
Due to the induction hypothesis on $\ell$, the right-hand sides of Eqs.~\ref{eq:boundsi1}--\ref{eq:boundsi2}
are bound by $s[b, \ell - 1]$. Since,
\[
	s[b, \ell - 1]  \leq s[b, \ell - 1] +  \score{b, i + 1}  = s[i + 1, \ell],
\]
we have proved the claim.
\end{proof}

The second lemma essentially states that $A_i$ is dense enough to yield an approximation.
To state the lemma, let $\delta_i = s[i, \ell] \times \epsilon / (k + \ell \epsilon)$ be the value of $\delta$
in \algsparsify during the $i$th round. For simplicity, we also define $\delta_0 = 0$.

\begin{lemma}
\label{lem:sparsify}
For every $b \in [1, m_i]$, there is $a_j \in A_i$ s.t.
\[
	s[a_j, \ell - 1] + \score{a_j, i}  \leq s[b, \ell - 1] + \score{b, i} + \delta_{i - 1}\quad.
\]
\end{lemma}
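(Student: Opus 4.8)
The plan is to reduce the statement to a purely ``vertical'' covering claim and then establish that claim directly, side-stepping any accumulation of error across rounds. Concretely, I would first prove the sharper fact that for every $b \in [1, m_i]$ there is some $a \in A_i$ with $a \geq b$ and $s[a, \ell - 1] \leq s[b, \ell - 1] + \delta_{i - 1}$. This fact implies the lemma immediately: since $a \geq b$, the monotonicity assumption on the penalty gives $\score{a, i} \leq \score{b, i}$, and adding $\score{a, i}$ to both sides of the $s$-inequality yields $s[a, \ell-1] + \score{a, i} \leq s[b, \ell-1] + \score{b, i} + \delta_{i-1}$, which is exactly the claim with $a_j = a$.

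Before proving the sharper fact I would record two preliminary observations. First, $\delta$ is non-decreasing in the round index: because $\delta_r = s[r, \ell]\,\epsilon / (k + \ell\epsilon)$ and $s[\cdot, \ell]$ is monotone by Lemma~\ref{lem:monotone}, we have $\delta_r \leq \delta_{i-1}$ for every $r \leq i - 1$ (and $\delta_0 = 0$). Second, every integer in $[1, m_i]$ is inserted into $A$ at some round $\leq i - 1$: the while-loop of \algalldp fills in the consecutive integers beyond the current maximum, and $\max A$ is non-decreasing across rounds, so nothing in $[1, m_i]$ is ever skipped. Since $1$ and $m_i$ are never deleted by \algsparsify (which only ever removes interior elements), for any $b \in [1, m_i] \setminus A_i$ there are consecutive survivors $a_p < b < a_{p+1}$ in $A_i$, and every integer strictly between $a_p$ and $a_{p+1}$ was inserted at some point and then deleted by the end of round $i - 1$.

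The heart of the argument, and the step I expect to be the main obstacle, is to bound the ``width'' of such a gap by a single $\delta_{i-1}$ rather than by a sum of $\delta$'s over all rounds in which deletions occurred. The naive replacement-by-neighbour argument chains across rounds and would accumulate one $\delta$ per deletion; to avoid this I would argue via the element of the open interval $(a_p, a_{p+1})$ that is deleted \emph{last}, say $e$, deleted during some round $r^\ast \leq i - 1$. I would show that at the moment $e$ is deleted its immediate neighbours in $A$ are exactly $a_p$ and $a_{p+1}$: all other integers of the interval have already been removed by the choice of $e$, the points $a_p$ and $a_{p+1}$ are never removed, and, crucially, the while-loop only ever appends integers beyond the current maximum, so once $a_{p+1}$ is present no new element can enter the interval; a short argument shows $a_{p+1}$ must already be present when $e$ is deleted, since otherwise $e$ would have a surviving right neighbour inside the interval, contradicting that $e$ is deleted last. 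The deletion test in \algsparsify then gives $s[a_{p+1}, \ell-1] - s[a_p, \ell-1] \leq \delta_{r^\ast} \leq \delta_{i-1}$. Finally, taking $a = a_{p+1}$ and using $s[a_p, \ell-1] \leq s[b, \ell-1]$ (Lemma~\ref{lem:monotone}, as $a_p \leq b$) gives $s[a_{p+1}, \ell-1] \leq s[b, \ell-1] + \delta_{i-1}$, establishing the sharper fact and hence the lemma; the case $i = 1$ is immediate since $A_1 = \set{1}$ and $\delta_0 = 0$.
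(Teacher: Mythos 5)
Your proof is correct, but it is organized differently from the paper's. The paper proceeds by a forward induction on the round index: it calls a sorted list $X$ with $x_1 = 1$ \emph{$\delta$-dense} if every adjacent pair either consists of consecutive integers or has $s$-gap at most $\delta$, observes that \algsparsify preserves denseness (the deletion test compares the two \emph{outer} elements of the triple, so the surviving pair is certified directly by the test), notes that $\delta_{i-2}$-denseness implies $\delta_{i-1}$-denseness because $\delta$ is non-decreasing across rounds (Lemma~\ref{lem:monotone}), and so maintains the invariant that $A_i$ is $\delta_{i-1}$-dense; the lemma then follows exactly as in your final step, by taking the smallest $a_j \geq b$. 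You replace this invariant with a backward, extremal analysis: for the gap $(a_p, a_{p+1})$ of $A_i$ containing $b$, you locate the \emph{last} deletion inside the gap and argue from the algorithm's dynamics (insertions occur only beyond the current maximum, hence in increasing order; deletions are permanent; deleted elements are never reinserted) that this deletion had $a_p$ and $a_{p+1}$ as its two neighbours, so a single \algsparsify test certifies the whole gap with $\delta_{r^\ast} \leq \delta_{i-1}$. Both arguments rest on the same two facts --- the test bounds the outer pair, so no error chains across deletions, and $\delta$ is monotone in the round index --- so neither is more general. The paper's invariant is shorter and mechanical to verify, while your event-based argument requires the careful bookkeeping about insertion order and about what ``present'' means at the moment of deletion; you do handle the delicate point (showing $a_{p+1}$ must already be present when $e$ is deleted, which absorbs the not-yet-inserted case that your earlier phrase ``all other integers of the interval have already been removed'' glosses over). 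A small stylistic benefit of your route: since a gap containing $b$ is necessarily non-trivial, you never need the ``or consecutive integers'' escape clause that the paper builds into its denseness definition.
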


\begin{proof}
We say that a sorted list of indices $X = x_1, \ldots, x_{\abs{X}}$
with $x_1 = 1$
is
$\delta$-dense if $s[x_j, \ell - 1] \leq \delta + s[x_{j - 1}, \ell - 1]$ or
$x_j = x_{j - 1} + 1$ for any $x_j \in X_i$. Note that if $X$ is $\delta$-dense,
then so is $\algsparsify(X, \delta, \ell)$ due to the if-condition in Algorithm~\ref{alg:sparsify}.

We claim that $A_i$ is $\delta_{i - 1}$-dense.
We will prove this by induction on $i$.  This is vacuosly true for $i = 1$.
Assume that $A_{i - 1}$ is $\delta_{i - 2}$-dense.
It is trivial to see that $A'_{i - 1} = A_{i - 1} \cup [m_{i - 1} + 1, m_i]$,
thus $A'_{i - 1}$ is $\delta_{i - 2}$-dense.
Since $\delta_i = s[i, \ell] \times \epsilon / (k + \ell \epsilon)$,
Lemma~\ref{lem:monotone} guarantees that $\delta_{i - 2} \leq \delta_{i - 1}$,
so $A'_{i - 1}$ is $\delta_{i - 1}$-dense. Finally, $A_i$ is $\delta_{i - 1}$-dense, since $A_{i} = \algsparsify(A'_{i - 1}, \delta_{i - 1}, \ell)$.

Let $a_j \in A_i$ be the smallest index that is larger than or equal to $b$.
If $b = a_j$, then the lemma follows immediately. If $b < a_j$,
then  $j > 1$ and
\[
    s[a_j, \ell - 1] \leq \delta_{i - 1} + s[a_{j - 1}, \ell - 1] \leq \delta_{i - 1} + s[b, \ell - 1],
\]
where 
the second inequality follows from the fact that $b > a_{j - 1}$ and Lemma~\ref{lem:monotone}.

The result follows as $\score{a, i} \leq  \score{b, i}$, for any $a \geq b$.
\end{proof}

We can now prove the main result.

\begin{proof}[Proof of Proposition~\ref{prop:alldpcorrect}]
Write $\gamma = 1 + \frac{\epsilon(\ell - 1)}{k}$.

We will prove the claim using induction on $\ell$ and $i$. The $\ell = 1$ or $i = 1$ cases is trivial.

To prove the general case,
let $B$ be an optimal $\ell$-segmentation covering $[1, i]$, and
let $b$ be the starting point of the last segment of $B$.  
We consider two cases.

\emph{Case (i):}
Assume that $b \leq m_i$, then according to Lemma~\ref{lem:sparsify}, there is $a \in A_i$ for which
\begin{equation}
\label{eq:boundsparsify}
	s[a, \ell - 1] + \score{a, i}  \leq s[b, \ell - 1] + \score{b, i} + \delta_{i - 1}\quad.
\end{equation}
Recall that $\delta_{i - 1} = s[i - 1, \ell] \frac{\epsilon}{k + \ell\epsilon}$, due to Line~\ref{alg:callsparse} in Alg.~\ref{alg:alldp}.
Using the induction hypothesis on $i$,
we can bound $\delta_{i - 1}$, 
\begin{equation}
\label{eq:bounddelta}
\begin{split}
	\delta_{i - 1} & = s[i - 1, \ell] \frac{\epsilon}{k + \ell\epsilon}
	\leq o[i - 1, \ell]\pr{1 + \epsilon\frac{\ell}{k}} \frac{\epsilon}{k + \ell\epsilon} \\
	& = o[i - 1, \ell]\frac{\epsilon}{k}
	\leq o[i, \ell]\frac{\epsilon}{k}\quad.
\end{split}
\end{equation}
We can now combine the previous inequalities and the induction hypothesis on $\ell$, which gives us
\begin{align*}
	s[i, \ell] & \leq \min_{x \in A_i} s[x, \ell - 1] + \score{x, i}  &\text{\llap{(Line~\ref{alg:dpmin} in Alg.~\ref{alg:alldp})}}\\
	& \leq s[a, \ell - 1] + \score{a, i}  & (a \in A_i)\\
	& \leq s[b, \ell - 1] + \score{b, i} + o[i, \ell]\frac{\epsilon}{k}  & \text{\llap{(Eqs.~\ref{eq:boundsparsify}--\ref{eq:bounddelta})}}\\
	& \leq o[b, \ell - 1]\gamma + \score{b, i} + o[i, \ell]\frac{\epsilon}{k}  & \text{\llap{(induction)}}  \\
	& \leq o[b, \ell - 1]\gamma + \score{b, i}\gamma + o[i, \ell]\frac{\epsilon}{k}  & (\gamma \geq 1) \\
	& = o[i, \ell]\gamma + o[i, \ell]\frac{\epsilon}{k}  
	 = o[i, \ell]\pr{1 + \frac{\epsilon\ell}{k}}\quad.  && \\
\end{align*}

\emph{Case (ii):}
Assume that $b > m_i$. If $b \in A_i'$, then
\begin{equation}
\label{eq:boundb1}
	s[i, \ell] = \min_{x \in A_i'} s[x, \ell - 1] + \score{x, i} \leq s[b, \ell - 1] + \score{b, i}\quad.
\end{equation}
Assume $b \notin A_i'$. This is only possible if the second condition in the while-loop failed, that is,
there is $a \leq b$ with
\begin{equation}
\label{eq:boundb2}
	s[i, \ell] \leq s[a, \ell - 1] \leq s[b, \ell - 1]  \leq s[b, \ell - 1] + \score{b, i}\quad.
\end{equation}
In both cases, the induction hypothesis on $\ell$ gives us
\begin{align*}
	s[i, \ell] & \leq s[b, \ell - 1] + \score{b, i}  & (\text{Eqs.~\ref{eq:boundb1}--\ref{eq:boundb2}})\\
	& \leq o[b, \ell - 1]\gamma + \score{b, i} & (\text{induction})\\
	& \leq o[b, \ell - 1]\gamma + \score{b, i} \gamma & (\gamma \geq 1)\\
	& = o[i, \ell]\gamma 
	 \leq o[i, \ell]\pr{1 + \frac{\epsilon\ell}{k}}\quad. 
\end{align*}
This proves the induction step and the proposition.
\end{proof}

Finally, let us prove the running time of \algalldp.

\begin{proposition}
\label{prop:alldptime}
$\algalldp(k, \epsilon)$ needs
$\bigO{\frac{k^2}{\epsilon} n}$ time.
\end{proposition}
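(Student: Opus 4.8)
The plan is to charge the total work to the two nested loops of \algalldp and to argue that the per-round cost is dominated by $\abs{A}$, which I will bound by $\bigO{k/\epsilon}$. Fix $\ell$. The inner for-loop runs $n$ times, and each round $i$ consists of three parts: the initial minimization on Line~\ref{alg:dpmin}, costing $\bigO{\abs{A_i}}$; the while-loop, which does a constant amount of work (including the update on Line~\ref{alg:dpmin2}) per inserted index; and the call to \algsparsify on Line~\ref{alg:callsparse}, which makes a single pass over its argument and hence costs $\bigO{\abs{A_i'}}$. Thus the cost of round $i$ is $\bigO{\abs{A_i'}}$ plus the number of indices inserted during that round, and summing over $i$ and then over the $\bigO{k}$ values of $\ell$ will yield the claim, once the two ingredients below are in place.

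The main obstacle, and the heart of the argument, is to show $\abs{A_i} \in \bigO{k/\epsilon}$. First I would observe that every index $a$ ever placed in $A$ satisfies $s[a, \ell - 1] \leq s[i, \ell]$: the initial index $1$ has $s[1, \ell - 1] = 0$, while any index appended by the while-loop at some round $i' \leq i$ satisfies $s[a, \ell - 1] \leq s[i', \ell] \leq s[i, \ell]$ by the while-condition and Lemma~\ref{lem:monotone}. Hence the $s$-values indexing $A_i$ all lie in $\spr{0, s[i - 1, \ell]}$. Next I would invoke the exit property of \algsparsify: after the call of round $i - 1$ with threshold $\delta_{i - 1} = s[i - 1, \ell]\,\epsilon/(k + \ell\epsilon)$, every consecutive triplet $a_j, a_{j + 1}, a_{j + 2}$ surviving in $A_i$ has $s[a_{j + 2}, \ell - 1] - s[a_j, \ell - 1] > \delta_{i - 1}$. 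Telescoping this inequality over the odd-indexed elements shows that $\abs{A_i} \leq 2\,s[i - 1, \ell]/\delta_{i - 1} + \bigO{1} = 2(k + \ell\epsilon)/\epsilon + \bigO{1}$, which is $\bigO{k/\epsilon}$ since $\ell \leq k$. The degenerate case $s[i - 1, \ell] = 0$ only helps, since then the threshold is $0$ and \algsparsify collapses $A$ to at most two indices.

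Finally, I would bound the total number of insertions. Because \algsparsify never removes the largest element and the while-loop only appends indices beyond the current maximum, $m_i = \max A_i$ is non-decreasing in $i$; the indices inserted during round $i$ form the interval $\spr{m_i + 1, m_{i + 1}}$, and these intervals are disjoint subsets of $\spr{1, n}$, so the insertions over the whole inner loop number at most $n$. Combining the pieces, the cost for a fixed $\ell$ is $\sum_{i = 1}^n \bigO{\abs{A_i'}}$, and writing $\abs{A_i'} = \abs{A_i} + (\text{insertions in round } i)$ with $\abs{A_i} \in \bigO{k/\epsilon}$ and the insertions summing to $\bigO{n}$, this telescopes to $\bigO{nk/\epsilon}$. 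Multiplying by the $\bigO{k}$ choices of $\ell$ gives the claimed $\bigO{nk^2/\epsilon}$ bound.
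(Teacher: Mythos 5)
Your proof is correct and follows essentially the same route as the paper's: the same decomposition of the per-round cost into $\abs{A_i}$ plus newly inserted indices (which telescope to at most $n$ across rounds), and the same $\bigO{k/\epsilon}$ bound on $\abs{A_i}$ obtained from the sparsification threshold $\delta_{i-1}$ combined with the fact that every candidate's $s$-value is at most $s[i-1, \ell]$. The only differences are cosmetic: the paper proves that latter fact as a separate lemma (Lemma~\ref{lem:candbound}) by induction on $i$, whereas you derive it directly from the while-loop condition and Lemma~\ref{lem:monotone}, and you additionally handle the degenerate case $\delta_{i-1} = 0$, which the paper leaves implicit.
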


We will adopt the same notation as with the proof of Proposition~\ref{prop:alldpcorrect}.
For simplicity, we also define $s[0, \ell] = 0$ for any $\ell = 1, \ldots, k$.

To prove the result we need two lemmas. 
The first lemma will be used to prove the second lemma.

\begin{lemma}
\label{lem:candbound}
$s[m_i, \ell - 1] \leq s[i - 1, \ell]$, where
$m_i = \max A_i$.
\end{lemma}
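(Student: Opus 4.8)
The plan is to induct on $i$ (with $\ell$ fixed), leaning on the structural description of how $m_i = \max A_i$ is produced by the previous round. First I would record two facts that are already available. Since \algsparsify only ever deletes a middle element $a_{j+1}$ and never the first or last entry of its input, the maximum is preserved, so $\max A_i = \max A'_{i-1}$. Combined with the identity $A'_{i-1} = A_{i-1} \cup [m_{i-1}+1, m_i]$ established in the proof of Lemma~\ref{lem:sparsify}, this gives $m_i \geq m_{i-1}$, and moreover, whenever $m_i > m_{i-1}$, the index $m_i$ is exactly the largest (hence last) index inserted by the while-loop during round $i-1$.

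For the base case $i = 1$ we have $m_1 = \max\set{1} = 1$, and since $\score{1,1} = 0$ the algorithm assigns $s[1, \ell-1] = 0$, which equals $s[0,\ell] = 0$; so the inequality holds. For the inductive step I would split on whether the maximum candidate grew in the previous round. If $m_i = m_{i-1}$, then $[m_{i-1}+1, m_i]$ is empty, the while-loop inserted nothing, and the claim reduces to $s[m_{i-1}, \ell-1] \leq s[i-1,\ell]$. Here I would apply the induction hypothesis $s[m_{i-1}, \ell-1] \leq s[i-2,\ell]$ together with monotonicity (Lemma~\ref{lem:monotone}), $s[i-2,\ell] \leq s[i-1,\ell]$, and chain the two. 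If instead $m_i > m_{i-1}$, then $m_i$ was inserted by the while-loop, so at the instant $a = m_i$ was tested its guard $s[m_i, \ell-1] \leq s[i-1,\ell]$ held for the then-current value of $s[i-1,\ell]$.

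The main obstacle is precisely this last case: the guard is checked against a value of $s[i-1,\ell]$ that may still shrink through subsequent minimizations on Line~\ref{alg:dpmin2}, so a priori the final value of $s[i-1,\ell]$ could drop below $s[m_i, \ell-1]$ and break the bound. To close this gap I would use the observation above that $m_i$ is the \emph{last} index the while-loop processes (the very next test, at $a = m_i + 1$, necessarily fails). Hence the only update occurring after the guard is $s[i-1,\ell] \define \min\pr{s[i-1,\ell],\, s[m_i, \ell-1] + \score{m_i, i-1}}$, and no update follows it. Since the guard makes the first argument at least $s[m_i, \ell-1]$, and $\score{m_i, i-1} \geq 0$ makes the second argument at least $s[m_i, \ell-1]$ as well, the minimum — and therefore the final value of $s[i-1,\ell]$ — is at least $s[m_i, \ell-1]$. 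This completes the inductive step and the lemma.
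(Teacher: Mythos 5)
Your proof is correct and follows essentially the same route as the paper's: induction on $i$, the base case $s[1, \ell-1] = 0 = s[0,\ell]$, the induction hypothesis chained with Lemma~\ref{lem:monotone} when no new candidate is inserted, the while-loop guard when $m_i > m_{i-1}$, and the fact that \algsparsify never deletes the last element so that $m_i = \max A'_{i-1}$. The only difference is that you explicitly close a gap the paper leaves implicit --- that $s[i-1,\ell]$ can still shrink via Line~\ref{alg:dpmin2} after the guard is tested, which you handle by noting $m_i$ is the last inserted index and the one remaining update cannot drop $s[i-1,\ell]$ below $s[m_i,\ell-1]$ --- so this is added rigor within the same argument, not a different approach.
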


\begin{proof}
We prove the lemma using induction on $i$. 
The case $i = 1$ is trivial since $A_1 = \set{1}$ and $s[1, \ell - 1] = (\ell - 1)p(1, 1) = 0 = s[0, \ell]$.

Assume that the claim holds for $i - 1$.
Now, Lemma~\ref{lem:monotone} guarantees that
$s[m_{i - 1}, \ell - 1] \leq s[i - 2, \ell] \leq s[i - 1, \ell]$.
The while-loop condition in \algalldp guarantees that $s[m, \ell - 1] \leq s[i - 1, \ell]$, for $m = \max A'_{i - 1}$.
Since \algsparsify never deletes the last element, we have $m = m_i$, which proves the lemma.
\end{proof}

Next, we bound the number of items in $A_i$.

\begin{lemma}
\label{lem:candsize}
$\abs{A_i} \leq 2 + 2(k + \ell \epsilon) / \epsilon  \in \bigO{k / \epsilon}$.
\end{lemma}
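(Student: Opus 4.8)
The plan is to exploit the termination condition of \algsparsify. When \algsparsify returns, every consecutive triplet $a_j, a_{j+1}, a_{j+2}$ surviving in the set must have \emph{failed} the removal test, i.e. $s[a_{j+2}, \ell - 1] - s[a_j, \ell - 1] > \delta_{i-1}$. Before relying on this I would first confirm that this is a \emph{global} invariant of the returned set and not merely local to the scan position: since the entries $a_j$ are sorted and $s[\cdot, \ell - 1]$ is monotone by Lemma~\ref{lem:monotone}, deleting a middle element only \emph{increases} the two-step $s$-gap of any triplet lying to its left, so no previously-cleared triplet is ever re-violated. Hence at termination all consecutive triplets of $A_i$ satisfy the strict two-step lower bound. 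I expect this verification to be the one genuinely delicate step; everything after it is bookkeeping.

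Next I would split the sorted list $a_1 < a_2 < \cdots < a_N$, with $N = \abs{A_i}$, into the subsequence of every second element. Applying the triplet bound with $j = 1, 3, 5, \ldots$ shows that consecutive elements of this subsequence have $s[\cdot, \ell - 1]$-values spaced by more than $\delta_{i-1}$. Telescoping, the extreme $s$-values of the subsequence differ by more than $(\lceil N / 2 \rceil - 1)\delta_{i-1}$.

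To convert this gap bound into a count I need the total span of $s[\cdot, \ell - 1]$ over $A_i$. The smallest element is always $a_1 = 1$ (the set is initialized to $\set{1}$ and \algsparsify never removes its first element), so $s[a_1, \ell - 1] = s[1, \ell - 1] = 0$; the largest is $m_i$, and Lemma~\ref{lem:candbound} gives $s[m_i, \ell - 1] \leq s[i - 1, \ell]$. Thus the whole span is at most $s[i - 1, \ell]$. Combining this with the telescoped bound and substituting $\delta_{i-1} = s[i - 1, \ell]\,\epsilon / (k + \ell \epsilon)$, the factor $s[i-1,\ell]$ cancels, yielding $\lceil N / 2 \rceil - 1 < (k + \ell \epsilon)/\epsilon$. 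Since $N \leq 2\lceil N/2 \rceil$, this gives $\abs{A_i} \leq 2 + 2(k + \ell \epsilon)/\epsilon$, and the membership in $\bigO{k / \epsilon}$ follows because $\ell \leq k$.

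The only gap left is the degenerate case $s[i - 1, \ell] = 0$, where $\delta_{i-1} = 0$ and the division above is illegal. There I would argue directly: the span of $s[\cdot, \ell - 1]$ over $A_i$ is then $0$, so any three surviving elements would form a triplet with two-step gap $0 \leq \delta_{i-1}$ and would therefore have been sparsified away, forcing $\abs{A_i} \leq 2$, which also respects the bound. With the global triplet invariant established, the counting argument itself is routine.
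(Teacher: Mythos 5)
Your proof is correct and takes essentially the same route as the paper's: the post-sparsification triplet invariant, telescoping over every second element of the sorted candidate list, and Lemma~\ref{lem:candbound} to bound the span so that $s[i-1,\ell]$ cancels against $\delta_{i-1}$. Your two additions—verifying that the triplet invariant holds globally at termination of \algsparsify (the paper merely asserts this) and treating the degenerate case $\delta_{i-1}=0$, where the paper's division is formally undefined—are both sound refinements of steps the paper leaves implicit.
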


\begin{proof}
Let $r = \lfloor (\abs{A_i}  + 1)  / 2\rfloor$ be the number of entries in $A_i$ with odd indices.
Due to \algsparsify, $A_i$ cannot contain two items, say $a_j$ and $a_{j + 2}$, such that
$s[a_{j + 2}, \ell - 1] \leq s[a_j, \ell - 1] + \delta_{i - 1}$.
Using this inequality on entries with odd incides we conclude that
	$s[m_i, \ell - 1] \geq (r - 1) \delta_{i - 1}$.
Lemma~\ref{lem:candbound} allows us to bound $r$ with
\[
	r - 1 \leq \frac{s[m_i, \ell - 1]}{\delta_{i - 1}} \leq  \frac{s[i - 1, \ell]}{\delta_{i - 1}} = \frac{k + \ell \epsilon}{\epsilon} \in \bigO{\frac{k}{\epsilon}}\quad.
\]
Since $\abs{A_i} \leq 2r$, the lemma follows.
\end{proof}

We can now prove the main claim.

\begin{proof}[Proof of Proposition~\ref{prop:alldptime}]
Computing $s[i, \ell]$ in \algalldp requires at most $\abs{A'_i}$ comparisons,
and \algsparsify requires at most $\abs{A'_i}$ while-loop iterations.
We have $\abs{A'_i} = \abs{A_i} + m_{i + 1} - m_i$.
For a fixed $\ell$, Lemma~\ref{lem:candsize} implies that the total number of comparisons is 
\[
\begin{split}
	\sum_{i = 1}^n \abs{A'_i} & = \sum_{i = 1}^n \abs{A_i} + m_{i + 1} - m_i \\
	& \leq n + \sum_{i = 1}^n \abs{A_i} \in \bigO{\frac{n k}{\epsilon}}\quad.
\end{split}
\]
The result follows since $\ell = 1, \ldots, k$.
\end{proof}

The idea behind this
approach is similar to the algorithm suggested by~\citet{guha:01:estimate}.
The main difference is how the candidate list is formed: 
\citet{guha:01:estimate} constructed the candidate list
by only adding new entries to it, whereas we are also deleting the values
allowing us to keep $A$ much smaller.

\section{A linear-time algorithm for\\cumulative maximum segmentation}\label{sec:allmaxseg}

In previous section, we presented a technique for computing a cumulative
segmentation. Our final contribution is a fast algorithm to solve cumulative
version of the maximum segmentation problem \prbmaxseg.

\begin{problem}[\prballmaxseg]
\label{prb:allmax}
Given a penalty function $\score{}$ and an integer $k$, 
find an $\ell$-segmentation $B$ covering $[1, i]$ that minimizes 
\[
	\scoremax{B} = \max_{1 \leq j \leq \ell} \score{b_{j - 1}, b_j},
\]
for every $i = 1, \ldots, n$ and every $\ell = 1, \ldots, k$.
\end{problem}

\begin{algorithm}[ht!]
\caption{$\algfmslow(k)$, solves \prballmaxseg.}
\label{alg:fmslow}
$b_i \define 1$, for $i = 0, \ldots, k$\;
$b_{k + 1} \define n$\tcpas*{sentinel, discarded in the end}
$\tau \define 0$\;
$s[1, \ell] \define 0$, for $\ell = 1, \ldots, k$\;
\While {$b_1 \leq n$} {
	$\ell \define \arg \min\limits_{1 \leq j \leq k}  \set{\score{b_{j - 1}, b_{j} + 1} \mid b_j < b_{j + 1}}$\; 
	$b_\ell \define b_\ell + 1$\;
	$\tau \define \max(\tau, \score{b_{\ell - 1}, b_{\ell}})$\;
	$s[b_\ell, \ell] \define \tau$\;
}
\Return $s$\;
\end{algorithm}

We should point out that we can apply the algorithm by~\citet{guha:07:linear} used to solve \prbmaxseg for
every $i$ and $\ell$, which would give us the computational complexity of $\bigO{nk^3\log^2 n}$.
Alternatively, we can use an algorithm given by~\citet{DBLP:conf/vldb/GuhaSW04} that provides us 
with the computational complexity of $\bigO{kn \log n}$.\!\footnote{Analysis by
\citet{DBLP:conf/vldb/GuhaSW04} states that the computational complexity is $\bigO{kn\log^2n}$ but the additional $\log n$
term is due to the $\bigO{\log n}$ oracle used to compute the segment cost.}
We will present a faster algorithm, running in $\bigO{n k \log k}$ time,
making the algorithm linear-time with respect to $n$.

Our algorihm works as follows (see Algorithm~\ref{alg:fmslow}): 
We start with a $k$-segmentation $B = (b_0 = 1, \ldots, b_k = 1)$.
It turns out that we can choose $b_\ell$ 
so that an $\ell$-segmentation, say $B' = (b_0' = b_0, \ldots, b_{\ell - 1}' = b_{\ell - 1},
b_\ell' = b_\ell +1)$ is an optimal $\ell$-segmentation covering $[1, b_\ell']$.
We increase  value of $b_\ell$ by 1, and repeat
until there are no indices that cannot be moved right, that is, $b_1 = n$.
At this point, every $b_\ell$ has visited every integer
between $1$ and $n$, so we have discovered all optimal segmentations.

To guarantee that $(b_0', \ldots, b_\ell')$ is optimal, we need to select
$\ell$ carefully. Here, we choose $\ell$ to be the index minimizing
$\score{b_{\ell - 1}, b_{\ell} + 1}$.  The main reason for choosing such a
value comes from the next lemma.

\begin{lemma}
\label{lem:maxbound}
Assume an $\ell$-segmentation $B = b_0, \ldots, b_\ell$ covering $[1, i]$. Let $\tau$ be such that
$\tau \leq \score{b_{c - 1}, b_c + 1}$, for every $c = 1, \ldots, \ell$.
Let $B'$ be an $\ell$-segmentation covering $[1, j]$ with $j > i$. Then $\scoremax{B'} \geq \tau$.
\end{lemma}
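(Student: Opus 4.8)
The plan is to exploit the fact that $B'$ reaches strictly further to the right than $B$, combined with the monotonicity of the penalty function (Assumption~1 in Section~\ref{sec:prel}). Both segmentations share the left endpoint $b_0 = b_0' = 1$, while the right endpoints satisfy $b_\ell = i < j = b_\ell'$. The intuition is that, because $B'$ must cover strictly more ground while using the same number $\ell$ of segments, at least one of its segments must entirely contain a full ``extended'' segment $[b_{c-1}, b_c + 1]$ of $B$; for that segment the hypothesis on $\tau$ then applies.

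First I would locate the crossing index. Consider the differences $b_c' - b_c$ for $c = 0, \ldots, \ell$. At $c = 0$ this difference is $0$, and at $c = \ell$ it equals $j - i > 0$. Hence there is a smallest index $c \geq 1$ with $b_c' > b_c$; by minimality every earlier boundary satisfies $b_{c'}' \leq b_{c'}$, and in particular $b_{c-1}' \leq b_{c-1}$. Since all boundaries are integers, $b_c' > b_c$ sharpens to $b_c' \geq b_c + 1$.

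Next I would compare the $c$th segment of $B'$ with the extended $c$th segment of $B$. From $b_{c-1}' \leq b_{c-1} \leq b_c + 1 \leq b_c'$ we see that the interval $[b_{c-1}, b_c + 1]$ is contained in $[b_{c-1}', b_c']$, so Assumption~1 yields $\score{b_{c-1}, b_c + 1} \leq \score{b_{c-1}', b_c'}$. Combining this with the hypothesis $\tau \leq \score{b_{c-1}, b_c + 1}$ and the definition of $\scoremax{}$ as a maximum over segments gives
\[
    \tau \leq \score{b_{c-1}, b_c + 1} \leq \score{b_{c-1}', b_c'} \leq \scoremax{B'}\quad,
\]
which is exactly the claim.

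The only genuinely non-trivial step is establishing the existence of the crossing index $c$: this is a discrete intermediate-value argument using that the boundary differences start at $0$ and end strictly positive. Everything afterwards is a direct application of monotonicity to a nested pair of intervals, so I expect no further obstacle.
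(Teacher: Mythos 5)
Your proof is correct and rests on the same key idea as the paper's: comparing the boundaries of $B$ and $B'$ and applying the monotonicity assumption to the nested intervals $[b_{c-1}, b_c + 1] \subseteq [b'_{c-1}, b'_c]$, which yields $\scoremax{B'} \geq \score{b'_{c-1}, b'_c} \geq \score{b_{c-1}, b_c + 1} \geq \tau$. The difference is only in packaging: the paper argues by contradiction, showing inductively that $\scoremax{B'} < \tau$ would force $b'_c \leq b_c$ for every $c$ (contradicting $j > i$), whereas you argue directly by locating the first crossing index with $b'_c > b_c$ --- the two are contrapositive formulations of the same argument.
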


\begin{proof}
Assume that $\scoremax{B'} < \tau$.
We claim that $b'_c \leq b_c$ for every $c = 1, \ldots, \ell$, and
this claim leads to $j = b_\ell' \leq b_\ell = i$, which is a contradiction.

We will prove the claim by induction.
To prove the induction base, note that
$b'_1 \leq b_1$, as otherwise $\score{b_{0}', b_1'} \geq \score{b_{0}, b_1 + 1} \geq \tau$.
To prove the induction step, assume that $b'_{c - 1} \leq b_{c - 1}$.
Then, $b'_c \leq b_c$, as otherwise $\score{b_{c - 1}', b_c'} \geq \score{b_{c - 1}, b_c + 1} \geq \tau$.
\end{proof}

The lemma can be used as follows:
Assume a current $k$-segmentation $B = (b_0 = 1, \ldots, b_k =
1)$, and let $\ell$ be the index minimizing $\tau = \score{b_{\ell - 1},
b_{\ell} + 1}$.  Let $B' = (b_0' = b_0, \ldots, b_{\ell - 1}' = b_{\ell - 1},
b_\ell' = b_\ell +1)$ be the resulting $\ell$-segmentation.  Assume that
$\scoremax{B'} = \tau$. Lemma~\ref{lem:maxbound} implies that there is no
segmentation with cost smaller than $\tau$ covering $[1, b_\ell']$. This makes
automatically $B'$ optimal. The proof for the case $\scoremax{B'} < \tau$ is
more intricate, and it is handled in the next proposition.

\begin{proposition}
\label{prop:maxcorrect}
$\algfmslow(k)$ solves \prballmaxseg.
\end{proposition}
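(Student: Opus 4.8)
The plan is to show that for every $i$ and every $\ell$ the entry $s[i,\ell]$ produced by $\algfmslow(k)$ equals the optimal value $\mathrm{OPT}=\min_B\scoremax{B}$ over $\ell$-segmentations $B$ covering $[1,i]$, which is exactly what \prballmaxseg asks. First I would record two elementary facts: each breakpoint $b_\ell$ is only ever incremented and sweeps every integer from $1$ to $n$, so every entry $s[i,\ell]$ is actually assigned (with $s[1,\ell]=0$ from initialization), and the stored value $\tau$ is non-decreasing during the run. I would then prove $s[i,\ell]\ge\mathrm{OPT}$ and $s[i,\ell]\le\mathrm{OPT}$ separately. Throughout I write $\sigma=\score{b_{\ell-1},b_\ell+1}$ for the minimal extension cost chosen in the iteration that pushes $b_\ell$ from $i-1$ to $i$, and $\tau^{-}$ for the value of $\tau$ at the start of that iteration, so the stored value is $\tau=\max(\tau^{-},\sigma)$.

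For $s[i,\ell]\ge\mathrm{OPT}$ the current $\ell$-segmentation $(b_0,\dots,b_\ell)$ covering $[1,i]$ is a witness, and I would argue each of its segments costs at most $\tau$. If $b_c>1$, then at the moment its right endpoint was last incremented to its present value the update $\tau\define\max(\tau,\score{\cdot,b_c})$ forced $\tau$ to be at least the cost of that segment at that time; any later step only raises its left endpoint, which by the monotonicity of $\score{}$ can only lower the cost, and $\tau$ only grows. Segments with $b_c=1$ are empty and cost $0$. Hence $\scoremax{(b_0,\dots,b_\ell)}\le\tau=s[i,\ell]$, so $\mathrm{OPT}\le s[i,\ell]$.

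The bound $s[i,\ell]\le\mathrm{OPT}$ is the heart of the proof and is where Lemma~\ref{lem:maxbound} enters. The crucial auxiliary fact I would establish is that the selected minimum $\sigma$ dominates \emph{all} extension costs, not just those at movable indices: for every index $c$ with $b_c<n$ one has $\score{b_{c-1},b_c+1}\ge\sigma$. For a movable index this is the defining property of the arg-min. For a stuck index ($b_c=b_{c+1}$) I would walk right to the first movable index $c'>c$, observe that the intervening values are equal so $b_{c'-1}=b_{c'}=b_c$, and combine $\score{b_{c'-1},b_{c'}+1}\ge\sigma$ with $\score{b_{c-1},b_c+1}\ge\score{b_c,b_c+1}$ (monotonicity, since $b_{c-1}\le b_c$). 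With this in hand, Lemma~\ref{lem:maxbound} applied to the $c$-segmentation $(b_0,\dots,b_c)$ shows that \emph{every} $c$-segmentation covering a prefix strictly longer than $b_c$ has maximum penalty at least $\sigma$.

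Finally I would package the lower bound as an invariant maintained over iterations: at the start of each iteration, for every $c$, every $c$-segmentation covering $[1,j]$ with $j>b_c$ has maximum penalty at least the current $\tau$. The base case is trivial as $\tau=0$. For the inductive step, the $\tau^{-}$ part of $\max(\tau^{-},\sigma)$ is inherited from the previous iteration (the range $j>b_c$ only shrinks as $b_c$ grows), while the $\sigma$ part is precisely the consequence of Lemma~\ref{lem:maxbound} derived above, applied for every $c$ simultaneously. Specializing to the iteration that sets $s[i,\ell]$, the incoming invariant holds for all $j>b_\ell^{-}=i-1$, hence already at $j=i$, giving $\mathrm{OPT}\ge\tau^{-}$; combined with $\mathrm{OPT}\ge\sigma$ from the same lemma, $\mathrm{OPT}\ge\max(\tau^{-},\sigma)=s[i,\ell]$. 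I expect the main obstacle to be exactly the ``intricate'' case flagged before the statement, where $\scoremax{B'}<\tau$ and $\tau$ is inherited rather than freshly created: handling it cleanly is what forces the invariant to range over \emph{all} segment counts $c$ at once, which in turn is what makes the all-indices strengthening of the arg-min property ($\score{b_{c-1},b_c+1}\ge\sigma$ at stuck indices) indispensable.
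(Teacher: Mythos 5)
Your proof is correct, and it reaches the result by a somewhat different route than the paper. Both arguments share the same two pillars: the upper bound comes from using the algorithm's current segmentation as a witness whose segments are all bounded by the running $\tau$ (the paper asserts this in one line; you spell out the monotonicity-plus-nondecreasing-$\tau$ argument), and the lower bound rests on Lemma~\ref{lem:maxbound} combined with the fact that the selected extension cost $\sigma$ dominates all other extension costs. The difference is how the lower bound is threaded through time. The paper handles the ``intricate'' inherited-$\tau$ case by jumping backwards: it takes the last iteration $j$ with $\tau_j < \tau_i$, notes that at iteration $j+1$ the freshly created extension cost equals $\tau_i$, applies Lemma~\ref{lem:maxbound} once at the state $B^j$, and transports the bound forward using $b_\ell^j < b_\ell^i$. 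You instead run a forward induction, maintaining the invariant that every $c$-segmentation reaching past $b_c$ costs at least the current $\tau$, for all segment counts $c$ simultaneously; the inherited part of $\max(\tau^-,\sigma)$ is then automatic and the fresh part is one application of the lemma per iteration. The two organizations buy roughly the same thing: yours avoids any reasoning about ``the last time $\tau$ increased,'' while the paper avoids carrying a universally quantified invariant. One place where your write-up is genuinely more careful: the algorithm's $\arg\min$ ranges only over movable indices (those with $b_c < b_{c+1}$), so the domination $\score{b_{c-1}, b_c+1} \geq \sigma$ at stuck indices is not literally ``by definition of $\ell'$'' as the paper claims; it requires exactly the walk-to-the-next-movable-index plus monotonicity argument you give. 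Since the paper's own application of Lemma~\ref{lem:maxbound} quantifies over all $c = 1, \ldots, \ell$, stuck or not, this step is needed there too, so your proposal fills a real, if small, gap in the published argument.
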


\begin{proof} 
Define $\tau_i$ to be the value of $\tau$ at the end of the $i$th
iteration.  Let also $B^i = b^i_1, \ldots, b^i_k$ be the values of
$b_1, \ldots, b_k$ at the end of the $i$th iteration.

Fix $i$ and assume that during the $i$th iteration we updated $b_\ell$.
Let $\tau^*$ be the optimal cost for $\ell$-segmentation covering $[1, b_\ell^i]$.
Since $b^i_0, \ldots, b^i_\ell$ covers $[1, b_\ell^i]$ and the cost of individual segments
is bounded by $\tau_i$, we have $\tau^* \leq \tau_i$.

To prove the optimality of  $b^i_0, \ldots, b^i_\ell$, we need to show that $\tau^* \geq \tau_i$.
Let $j$ be the largest index such that $\tau_j < \tau_i$.
If such value does not exist, then $\tau_i = 0 \leq \tau^*$.
Clearly, $j < i$.
Assume that we update $b_{\ell'}$ during the $(j + 1)$th iteration.
Then, by definition of $\ell'$, 
\[
	\score{b^j_{c - 1}, b^j_c + 1} \geq \score{b_{\ell' - 1}^j, b_{\ell'}^j + 1} = \tau_{j + 1} = \tau_i,
\]
for any $c = 1, \ldots, \ell$.
Since $b_{\ell}^j \leq b_{\ell}^{i - 1} < b_{\ell}^i$, Lemma~\ref{lem:maxbound} implies that an $\ell$-segmentation
covering $[1, b_\ell^i]$ must have a cost of a least $\tau_i$. This proves the proposition.
\end{proof}

We finish with the computational complexity analysis.

\begin{proposition}
\algfmslow runs in $\bigO{nk \log k}$ time.
\end{proposition}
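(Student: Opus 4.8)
The plan is to bound the total running time as the product of two quantities: the number of iterations of the while-loop, and the cost of carrying out one iteration. First I would count the iterations. Every pass through the while-loop increments exactly one of the counters $b_1, \ldots, b_k$ by one; each counter is non-decreasing and never exceeds $n$, so it is incremented at most $n - 1$ times. Summing over the $k$ counters, the total number of iterations is at most $k(n - 1) \in \bigO{nk}$.

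The substance of the proof is to show that one iteration runs in $\bigO{\log k}$ time. Evaluating the $\arg\min$ naively costs $\bigO{k}$ and would only yield $\bigO{nk^2}$, so instead I would maintain an indexed min-heap $Q$ holding the valid indices $j$ (those with $b_j < b_{j+1}$), each keyed by $\score{b_{j-1}, b_j + 1}$. Determining $\ell$ is then a find-min, and since $Q$ contains at most $k$ elements, every insertion, deletion, and key update costs $\bigO{\log k}$.

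The key observation is that incrementing $b_\ell$ perturbs only a constant number of heap entries, so I would track the dependencies explicitly. The key of index $j$ involves only $b_{j-1}$ and $b_j$, while the membership of index $j$ in $Q$ depends only on the comparison $b_j < b_{j+1}$. Hence raising $b_\ell$ by one can affect at most three indices: it changes the key of index $\ell$ (and removes it from $Q$ if now $b_\ell = b_{\ell+1}$); it changes the key of index $\ell + 1$, whose key uses $b_\ell = b_{(\ell+1)-1}$; and it may insert index $\ell - 1$, which turns valid precisely when $b_{\ell-1} = b_\ell$ held before the increment and $b_{\ell-1} < b_\ell$ after it. When $\ell = 1$ or $\ell = k$ the out-of-range neighbour is simply skipped. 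No other index is touched, because neither its key nor its validity comparison mentions $b_\ell$. Thus each iteration performs $\bigO{1}$ heap operations.

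The hard part is exactly this last bookkeeping: one must verify that the keys and validity conditions of all remaining indices are genuinely unaffected, which rests on the local dependency structure described above. Granting that, multiplying the $\bigO{nk}$ iterations by the $\bigO{\log k}$ cost of each gives the claimed $\bigO{nk \log k}$ running time.
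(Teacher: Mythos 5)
Your proof is correct and takes essentially the same approach as the paper's: bound the number of while-loop iterations by $\bigO{nk}$ (each iteration advances exactly one boundary, and boundaries only move right), and maintain a priority queue of size $\bigO{k}$ so that finding $\ell$ and updating the queue costs $\bigO{\log k}$ per iteration. Your explicit bookkeeping of which heap entries are perturbed (at most the three indices $\ell - 1$, $\ell$, $\ell + 1$) simply fills in details that the paper's terse proof leaves implicit.
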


\begin{proof}
Since $b_j$ can move only to the right, the while-loop is evaluated at most $\bigO{kn}$ times.
Computing $j$ at each iteration can be done by maintaining a priority queue of size $\bigO{k}$.
After updating $b_j$, updating the queue can be done in $\bigO{\log k}$ time.
\end{proof}

\bibliographystyle{abbrvnat}
\bibliography{bibliography}

\end{document}


\begin{frontmatter}
\title{Supplementary material for strongly polynomial efficient approximation scheme for segmentation}
\author{Nikolaj Tatti}
\address{F-Secure, HIIT, Aalto University, Finland}
\ead{nikolaj.tatti@aalto.fi}

\begin{abstract}
In this supplementary material we revisit an algorithm proposed by~\citet{guha:07:linear},
and show that this algorithm can be used to solve the maximum segmentation problem.
\end{abstract}

\end{frontmatter}



\section{Maximum segmentation}\label{sec:seg}

\begin{problem}[\prbmaxseg]
\label{prb:max}
Given a penalty $\score{}$ and an integer $k$, 
find a $k$-segmentation $B$ covering $[1, n]$ that minimizes 
\[
	\scoremax{B} = \max_{1 \leq j \leq k} \score{b_{j - 1}, b_j}\quad.
\]
\end{problem}

This optimization problem can be solved with an algorithm given by~\citet{guha:07:linear}.
This algorithm is designed to solve an instance of \prbmaxseg, where the penalty function is $L_\infty$ 
is
\[
	\score{a, b} = \min_{\mu} \sum_{i = a}^{b - 1} |x_i - \mu|\quad.
\]
Luckily, the same algorithm and the proof of correctness, see Lemma~3~in~\citep{guha:07:linear}, is valid for
any monotonic penalty.

For the sake of completeness we present this algorithm in Algorithms~\ref{alg:long}--\ref{alg:fmfast}.

The main idea is based on the following observation. Let $B$ be the optimal
segmentation with the cost of $\scoremax{B} = \tau$.  Then $\score{b_0, b_1} =
\tau$ or $\scoremax{B'} = \tau$, where $B' = b_1, \ldots, b_k$ is a $(k -
1)$-segmentation covering $[b_1, n]$.
In the first case, we can safely assume
that $b_1 = c$, where $c$ is the smallest index for which there is a $(k - 1)$-segmentation
covering $[b_1, n]$ with a maximum cost of $\score{b_0, b_1}$.
In the second
case, we can safely assume that $b_1 = c - 1$, that is, the largest index for
which there is \emph{no} $(k - 1)$-segmentation covering $[b_1, n]$ with a
maximum cost of $\score{b_0, b_1}$.

This gives rise to the main loop: compute
$c$, and record $\Delta_1 = \score{b_0, c}$, then recurse and discover the best
$(k - 1)$ segmentation covering $[c - 1, n]$, with a cost of, say, $\Delta_2$.
Then, the correct cost is $\min(\Delta_1, \Delta_2)$. For the complete proof of
correctness see Lemma~3~in~\citep{guha:07:linear}.\!\footnote{In pseudo-code
given in~\cite{guha:07:linear}, an incorrect step, $i \define c$, is used.
However, in the proof of correctness, Lemma~3, correct value is used.}

For completeness, we provide the proof of correctness.
In order to do so, let us first define
\[
	f(b; i, k) = [\alglong(b, k - 1, \score{i, b}) = n],
\]
returning true or false depending whether the statement inside the brackets is valid.

\begin{algorithm}[ht]
\caption{$\alglong(b, k, \tau)$ computes the largest index that can be reached with a $k$-segmentation
starting from $b$ with a cost $\scoremax{} \leq \tau$.}
\label{alg:long}
	\ForEach{$\ell = 1, \ldots, k$} {
		$b \define \max_j \set{b \leq j \leq n \mid \score{b, j} \leq \tau}$\;
		\nonl\hfill\tcpas{use binary search}
	}
\Return $b$\;
\end{algorithm}

\begin{algorithm}[ht!]
\caption{$\algfmfast(k)$, computes the cost of $k$-segmentation solving \prbmaxseg.}
\label{alg:fmfast}
    $\Delta \define \infty$\;
	$i \define 1$\;
    \ForEach{$\ell = k, \ldots, 1$}{
        $c \define \min \left\{b \geq i \mid
                \alglong(b, \ell - 1, \score{i, b})  = n \right\}$\;
		\nonl\hfill\tcpas{use binary search}
		$\Delta \define \min (\Delta,  \score{i, c})$\;
		\lIf {$i = c$} {\Return $\Delta$}
		$i \define c - 1$\;
    }
    \Return $\Delta$\;
\end{algorithm}

\begin{proposition}
$f(b; i, k)$ returns true if and only if there is a $k$-segmentation $B$
covering $[i, n]$ with $\scoremax{B} \leq \score{i, b}$.
\end{proposition}

\begin{proof}
The only if direction is trivial.

To prove the if direction, assume that $B$ satisfies the conditions of the
proposition, and let $C$ be the segmentation constructed by $\alglong(b, k - 1,
\score{i, b})$.  Note that $b_1 \leq c_1$.  If $b_1 < c_1$, then we can
move $b_1$ to the right, without violating the conditions. Thus, we can safely
assume that $b_1 = c_1$. By doing this recursively, we can safely assume that $b_i = c_i$.
This guarantees that $f$ returns true.
\end{proof}

\begin{proposition}
\algfmfast returns the cost of the optimal segmentation.
\end{proposition}

\begin{proof}
Let us write $i_\ell$, $c_\ell$ to be the values of variables
during the foreach-loop of \algfmfast. Note that $\ell$ goes from $k$ to $1$,
or terminated early.

Write $\tau_\ell = \score{i_\ell, c_\ell}$.
Define $C_\ell$ to be the segmentation discovered by $\alglong(c_\ell, \ell - 1,
\score{i_\ell, c_\ell})$ preceding by the segment $(i_\ell, c_\ell)$.

Let $B_\ell$ be the optimal segmentation
covering $[i_\ell, n]$ with the cost of $\rho_\ell$.
Define $C'_\ell$ to be $B_{\ell - 1}$ preceding by the segment $(i_\ell, c_\ell - 1) = (i_{\ell}, i_{\ell - 1})$.

By definition of $c_\ell$, $\tau_\ell = \score{i_\ell, c_\ell} = \scoremax{C_\ell}$.
Since $C_\ell$ covers $[i_\ell, n]$, we also have
\[
	\tau_\ell = \score{i_\ell, c_\ell} = \scoremax{C_\ell} \geq \rho_\ell\quad.
\]

By the minimality of $c_\ell$, 
$\score{i_\ell, c_\ell - 1} < \scoremax{B_{\ell - 1}}$, and 
since $C_\ell'$ covers $[i_\ell, n]$, we also have
\[
	\rho_{\ell - 1} = \scoremax{B_{\ell - 1}} =  \scoremax{C_\ell'} \geq \rho_\ell\quad.
\]

Assume that the for-loop is terminated early due to the $i_\ell = c_\ell$ condition.
Then $0 = \tau_\ell \geq \rho_\ell \geq \rho_k$, that is, there is a $k$-segmentation covering
$[1, n]$ with zero cost. Since the algorithm outputs $0$, this proves the special case,
so we can safely assume that the for-loop is not terminated early.

Define $\eta_\ell = \min_{j \leq \ell} \tau_j$.
Note that $\eta_k$ is the output of the algorithm, so
to prove the result,
we claim that $\eta_\ell = \rho_\ell$ which we will prove by induction on $\ell$.
The case $\ell = 1$ is trivial. 

Fix $\ell$ and let $b$ be the ending point of the first segment in $B_\ell$.
We consider two cases.

Assume that $f(b; i_\ell, \ell)$ is true.
Then we must have $c_\ell \leq b$.
Thus, $\tau_\ell = \score{i_\ell, c_\ell} \leq \score{i_\ell, b} \leq \rho_\ell$.
So, $\tau_\ell = \rho_\ell$. The induction hypothesis states that $\eta_{\ell - 1} = \rho_{\ell - 1} \geq \rho_\ell$.
Hence, $\eta_\ell = \min(\eta_{\ell - 1}, \tau_\ell) = \rho_\ell$.

Assume that $f(b; i_\ell, \ell)$ is false. Then we must have $c - 1 \geq b$.
In other words, $B_{\ell - 1}$ need to cover less than the remaining segments of $B_\ell$.
Thus $\rho_{\ell - 1} \leq \rho_\ell$. Since, $\tau_\ell \geq \rho_\ell$ and 
$\rho_{\ell - 1} \geq \rho_\ell$, the induction hypothesis implies that
\[
	\eta_\ell = \min (\tau_\ell, \eta_{\ell - 1}) = \min (\tau_\ell, \rho_{\ell - 1}) = \min (\tau_\ell, \rho_{\ell}) = \rho_\ell,
\]
this proves the induction step.
\end{proof}

To show the computational complexity, 
note that for a fixed $\ell$ we need
$\bigO{\log n}$ evaluations of \alglong to compute $c$, each
evaluation requiring $\bigO{k \log n}$ time. Thus, computing a single $c$ requires $\bigO{k \log^2 n}$.
We need to repeat this $\bigO{k}$ times, which gives us a total running time of $\bigO{k^2 \log^2 n}$.

\bibliographystyle{abbrvnat}
\bibliography{bibliography}